\documentclass[a4paper,12pt]{article}
\usepackage{amsmath,amssymb,amsthm}
\usepackage[colorlinks,citecolor=blue,urlcolor=blue,breaklinks=true]{hyperref}
\usepackage{amsfonts}
\usepackage{makeidx}
\usepackage{multirow}
\usepackage{lscape}
\usepackage{float}
\usepackage{graphicx}
\usepackage{lineno}
\usepackage{enumerate}
\usepackage[authoryear,round]{natbib}


\makeatletter

\newcommand{\Rmnum}[1]{\expandafter\@slowromancap\romannumeral #1@}
\makeatother

\newcommand{\E}{\mathop{\mathbb{E}}}
\newcommand{\R}{\mathop{\mathbb{R}}}
\newcommand{\la}{\lambda}
\newcommand{\pr}{\textrm{pr}}

\def \sumi {\displaystyle\sum_{i=0}^{\infty}}

\def \E{\mbox{E}}


\makeatletter
\newcommand*\nobreakhyphen{\hbox{-}\nobreak\hskip\z@skip}
\makeatother

\newtheorem{theorem}{Theorem}[section]
\newtheorem{lemma}[theorem]{Lemma}

\newtheorem{corollary}[theorem]{Corollary}
\newtheorem{definition}[theorem]{Definition}
\newenvironment{remark}[1][Remark]{\begin{trivlist}
\item[\hskip \labelsep {\bfseries #1}]}{\end{trivlist}}

\title{Self-excited Threshold Poisson Autoregression}
\author{
Chao Wang,
\and
Heng Liu, 
\and
Jian-Feng Yao,
\and
Richard A. Davis,
\and
Wai Keung Li 
\footnote{
This is a second revision; the original paper was written in November 2011 and the first revision in June 2012.
Chao Wang is Post-doctoral Fellow, Department of Statistics and Actuarial Science, the University of Hong Kong, Hong Kong (email: chaowang@connect.hku.hk);
Heng Liu is PhD Student, Department of Statistics, Columbia University, New York (email: hengliu@stat.columbia.edu);
Jian-Feng Yao is Associate Professor, Department of Statistics and Actuarial Science, the University of Hong Kong, Hong Kong (email: jeffyao@hku.hk);
Richard A. Davis is Professor, Department of Statistics, Columbia University, New York (email: rdavis@stat.columbia.edu);
and 
Wai Keung Li is Professor, Department of Statistics and Actuarial Science, the University of Hong Kong, Hong Kong (email: hrntlwk@hku.hk).
This research is conducted using the HKU Computer Centre research computing facilities supported partially by Hong Kong UGC Special Equipment Grant SEG HKU09.
Richard A. Davis's research is supported in part by the U.S. National Science Foundation grant DMS-1107031.
Wai Keung Li's research is supported partially by the HKSAR Research Grant Council General Research Fund \#703711.
}
}

%
%



\begin{document}
\date{June 29, 2013}
\maketitle
\begin{abstract}
This paper studies theory and inference of an observation-driven model for time series of counts. It is assumed that the observations follow a Poisson distribution conditioned on an accompanying intensity process, which is equipped with a two-regime structure according to the magnitude of the lagged observations. The model remedies one of the drawbacks of the Poisson autoregression model by allowing possibly negative correlation in the observations. Classical Markov chain theory and Lyapunov's method are utilized to derive the conditions under which the process has a unique invariant probability measure and to show a strong law of large numbers of the intensity process. Moreover the asymptotic theory of the maximum likelihood estimates of the parameters is established. A simulation study and a real data application are considered, where the model is applied to the number of major earthquakes in the world.
\end{abstract}

{\bf Keywords}:  Integer-valued GARCH; Invariant probability measure; Self-excited threshold process; Strong law of large numbers; Time series of counts.

\section{Introduction}
There has been increasing interest in developing models for time series of counts because of their wide range of applications, including epidemiology, finance, disease modeling and environmental science. The majority of these models assume that the observations follow a Poisson distribution conditioned on an accompanying intensity process that drives the dynamics of the model, see \citet{DavisDunsmuirStreett2003}, \citet{FerlandLatourOraichi2006}, \citet{FokianosRahbekTjostheim2009}, \citet{FokianosTjostheim2011}, \citet{DavisLiu2012} and \citet{DoukhanFokianosTjostheim2012}. According to whether the evolution of the intensity process depends on the observations or solely on an external process, \cite{Cox1981} classified the models into observation-driven and parameter-driven. Compared to parameter-driven models, an observation-driven model usually enjoys a considerably easier and more straightforward estimation procedure, however, it is difficult to establish stability properties, including stationarity and mixing conditions of the model. This paper formulates and investigates a self-excited threshold Poisson autoregression process, which belongs to the class of observation-driven models.

One observation-driven model, the Poisson autoregression, also known as the Poisson integer-valued GARCH (INGARCH), has already received considerable study in the literature, see for example, \cite{FerlandLatourOraichi2006}, \cite{FokianosRahbekTjostheim2009}, \cite{Neumann2011}, \cite{DoukhanFokianosTjostheim2012}, \cite{DavisLiu2012}, and \citet{FokianosTjostheim2012}. For this model, it is assumed that the observations $\{Y_t\}$ given the intensity process $\{\lambda_t\}$ follow Poisson distribution, where $\lambda_t$ follows the GARCH-like recursions $\lambda_t=\delta + \alpha \lambda_{t-1}+\beta Y_{t-1}$. The name GARCH associated with this model comes from \citet{Bollerslev1986} as the Poisson mean coincides with its variance, and is known for its capability of capturing positive temporal dependence in the observations and it is relatively easy to fit via maximum likelihood. \cite{FokianosRahbekTjostheim2009} studied the model and established the asymptotic theory of the parameter estimates by introducing a small perturbation. \cite{Neumann2011} considered some contracting dynamics of $\lambda_t$ and derived mixing condition of the count process. \cite{DavisLiu2012} generalized the conditional distribution of $\{Y_t\}$ to a one-parameter exponential family and took advantage of the theory for iterated random functions \citep{DiaconisFreedman1999,WuShao2004} to establish stationarity and absolute regularity of the process, as well as the asymptotic distribution of the parameter estimates. \cite{DoukhanFokianosTjostheim2012} showed similar results by utilizing the concept of $\tau$-weak dependence. More recently, \cite{BlasquesKoopmanLucas2012} considered a class of generalized autoregressive score processes which includes Poisson autoregression as a special case and used the Dudley entropy integral to obtain a wider non-degenerate parameter region that guarantees the stationarity and ergodicity of the processes.

Despite many advantages that the Poisson autoregression model enjoys, it is incapable of modeling negative serial dependence in the observations. This can be seen through the fact that $\{Y_t\}$ can be represented as an ARMA$(1, 1)$ process with a sequence of martingale differences as innovations and with a positive autoregressive coefficient (see e.g., \cite{DavisLiu2012}). This concern motivated \citet{FokianosTjostheim2011} in part to study the so-called log-linear Poisson autoregression. Our paper proposes a self-excited threshold integer-valued Poisson autoregression model (SETPAR), which allows for a more general modeling framework for the intensity process, including the possibility of negative serial dependence in the data. The model assumes a two-regime structure of the conditional mean process $\{\lambda_t\}$ according to the magnitude of the lagged observations. Such an extension to a model with threshold has its own merits, on account of the successful modeling strategy of a self-excited threshold autoregressive moving average process introduced by \cite{Tong1990}. 

Some studies have been directed to this model from different perspectives. \citet{WoodardMattesonHenderson2011} discussed a large class of the so-called ``generalized autoregressive moving average models'' which includes a similar threshold model. The model was also found in another general study of observation-driven time series models by \citet{DoucDoukhanMoulines2013}. Despite several similar results found in their papers and ours, we adopt a different methodology, which is well suited to these types of models.  The difficulty with the theory is that the Markov kernel associated with the model lacks proper continuity. \citet{WoodardMattesonHenderson2011} adopted the existing approach of \citet{FokianosRahbekTjostheim2009} which is based on a smoothed approximation of the Markov chain by adding an asymptotically vanishing noise. \citet{DoucDoukhanMoulines2013} considered the model directly and applied a coupling construction to prove the uniqueness of the stationary distribution with the same conditions on model coefficients for the ergodicity as ours (compare their Proposition 14 and our Theorem~\ref{thm:stable}). We studied the model directly using a different concept of e-chain (see Chapter 6, \citet{MeynTweedie1993}), which has an asymptotic continuity property that guarantees the uniqueness of a stationary distribution with mild additional conditions. Regarding the coverage of the approaches, the coupling argument applies to the log-linear Poisson autoregressions \citep{FokianosTjostheim2011, DoucDoukhanMoulines2013} as well. This is however not surprising since the Markov chains in a log-linear Poisson autoregressions and SETPAR model are very similar and our approach through e-chains can also be used for a log-linear Poisson autoregression as well. In addition, we are able to establish consistency and asymptotic normality of the maximum likelihood estimates directly based on our discussion of the stability property of the model under mild conditions on the parameters.

The organization of the paper is as follows. Section~\ref{sec:model} formulates the model and establishes its stability properties. Likelihood inference and asymptotic theory of the estimates are investigated in Section~\ref{estimation}. Some numerical results, including a simulation study and a real data example are given in Section~\ref{sec:sim}. The model is applied to the counts of major earthquakes in the world, and some diagnostic tools for assessing and comparing model performance are also given in this section. Section~\ref{sec:concl} discusses some problems which are worth further study and concludes the paper. Proofs of the key results in Sections~\ref{sec:model} and Section~\ref{estimation} are deferred to the Appendix.

\section{The model and its properties}
\label{sec:model}

For ease of discussion, only the first order self-excited threshold Poisson autoregression is investigated in this paper. However, the generalization to higher order model with multiple thresholds is also possible using similarly stylized arguments.

\begin{definition}
  \label{def:tpar}  
A sequence of random observations $\left\{ Y_t,t\in\mathbb{Z} \right\}$ is said to follow the self-excited threshold Poisson autoregression (SETPAR) model, if
\begin{align}
\mathcal{L}(Y_t \mid \mathcal{F}_{t-1}) & = \textrm{Poisson}(\lambda_t),
\label{y1}
\end{align}
where $\mathcal{F}_{t}=\sigma\left\{ Y_s,~s\leq t \right\}$, and
\begin{align}
  \lambda_t=
\left\{
\begin{array}{cc}
  d_1+a_{1}\lambda_{t-1}+b_{1}Y_{t-1},&Y_{t-1}\leq  r,\\
\\
  d_2+a_{2}\lambda_{t-1}+b_{2}Y_{t-1},&Y_{t-1} > r,\\
\end{array}
\right.
\label{lambda}
\end{align}
with $d_{i}> 0, a_{i}> 0,b_{i}> 0,~i=1,2$, and $r\in \mathbb{N}$. 
\end{definition}


Let $\theta^{(i)}=(d_{i},a_{i},b_{i})^\intercal\; (i=1,2)$ be the regime-specific parameter vector. It is reasonable to assume $\theta^{(1)}\ne \theta^{(2)}$, since otherwise, the model is reduced to the ordinary Poisson autoregression. The intercept parameter $d_i$ is restricted to be positive to avoid a Poisson distribution with zero mean. 

The dynamics of the process is governed by a two-regime scheme. In the following context, if $Y_{t-1}\leq r$ then we say $Y_t$ lies in the lower regime, denoted by $Y_t\in R_1$, where $R_1=\left\{ 0,\dots,r \right\}$; otherwise, $Y_t$ is in the upper regime, denoted by $Y_t\in R_2$, $R_2=\mathbb{N}-R_1$.

Let $\left\{ N_t(\cdot),t\in\mathbb{Z} \right\}$ be a sequence of independent Poisson processes with unit intensity. As suggested by \citet{FokianosRahbekTjostheim2009}, it is sometimes convenient to treat $Y_t$ in Eq~\eqref{y1} as the sampling value of $N_t$ at time $\lambda_t$, i.e.,
\begin{align}
  Y_t=N_t(\lambda_t),
  \label{y}
\end{align}
where $\lambda_t$ is the same as in Eq~\eqref{lambda}.

Although the process $\left\{ \lambda_t \right\}$ as well as the joint one $\left\{ (\lambda_t,Y_t) \right\}$  is a Markov chain, it is difficult to investigate the properties of the these processes, mainly due to the fact that the real-valued intensity process $\lambda_t$ is a function of the real-valued $\lambda_{t-1} $ and the discrete-valued innovations $Y_{t-1}$ (see also \citet{FokianosRahbekTjostheim2009}, \citet{WoodardMattesonHenderson2011}). 
In particular, it is easy to show that $\{\lambda_t\}$ is not a strong Feller chain even for the Poisson autoregression model without a threshold, which implies that one needs to apply more nonstandard Markov chain theory, such as Lyapunov's method and e-chains, in order to establish stability properties. Due to the importance of the concept of stability, its definition by \citet{Duflo1997} is given below. 
Readers are referred to Sections 6.1-6.2 in \citet{Duflo1997} and Section 6.4 in \citet{MeynTweedie1993} for other corresponding definitions and relevant theory of Lyapunov's method and e-chains.

\begin{definition}
  (Definition 6.1.1, Definition 6.1.4, \citet{Duflo1997}) Suppose that a random sequence $\left\{ X_n \right\}$ is defined on a metric space $E$ together with its Borel $\sigma$-field. $\left\{ X_n \right\}$ is said to be a stable model if there exists a probability distribution $\mu$ on $E$ such that, for almost all $\omega$, the sequence of empirical distributions $$\Lambda_n(\omega,\cdot)=\frac{1}{n+1}\sum_{t=0}^n 1\left\{ X_t(\omega)\in \cdot \right\}$$ converges weakly to $\mu$. The distribution $\mu$ is the stationary distribution for the model.
   
   A Markov chain is said to be stable if its state space is a metric space, and for any initial distribution $\nu$, the induced random sequence is stable with a stationary distribution independent of $\nu$.
\end{definition}



We begin with the following theorem establishing the stability of $\{\lambda_t\}$.

\begin{theorem}
  Consider the model in Definition~\ref{def:tpar}. Assume $a_1<1$ and $a_2+b_{2}<1$ , then
  \begin{enumerate}
  \item 
    The Markov chain $\left\{ \lambda_t \right\}$ is stable and possesses a unique invariant probability measure $\mu$, which has moments of all orders.
  \item 
    For any $\mu$-a.s. continuous function  $\phi$ satisfying 
    \[    |\phi(\lambda)| \le c(1+\lambda^k),
    \]
    for some power $k\ge0$ and constant $c$, 
    it holds that
    \[     \frac1n \left[ \phi(\lambda_1)+\cdots+\phi(\lambda_n)  \right]\to \mu(\phi), a.s.
    \]
   for any initial value $\lambda_0$.
  \end{enumerate}
\label{thm:stable}
\end{theorem}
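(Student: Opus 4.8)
The plan is to combine a Foster--Lyapunov drift analysis with the e-chain theory of \citet{MeynTweedie1993} (Chapter~6) and \citet{Duflo1997} (Sections~6.1--6.2), using the Poisson representation $Y_t=N_t(\lambda_t)$ in Eq.~\eqref{y} to build an explicit coupling. First I would establish a geometric drift condition for $V(\lambda)=1+\lambda$. Taking the conditional expectation of the update map \eqref{lambda} over $Y\sim\Pois(\lambda)$, the key observation is that $P(Y\le r\mid\lambda)$ and $\E[Y\mathbf{1}\{Y\le r\}\mid\lambda]$ decay exponentially in $\lambda$, while $\E[Y\mid\lambda]=\lambda$. Hence for large $\lambda$ the chain sits in the upper regime with overwhelming probability and $\E[\lambda_t\mid\lambda_{t-1}=\lambda]=(a_2+b_2)\lambda+O(1)$, every lower-regime contribution being bounded because $Y\le r$ there. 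Since $a_2+b_2<1$ this yields $PV\le\gamma V+L$ with $\gamma<1$, and the same computation with $V_k(\lambda)=\lambda^k$ produces the contraction rate $(a_2+b_2)^k<1$, so by induction on $k$ the invariant law has moments of every order. Because the transition kernel acts as $Pf(\lambda)=\sum_{y\ge0}f(g(\lambda,y))e^{-\lambda}\lambda^y/y!$, a uniformly convergent series of continuous functions of $\lambda$, the chain is weak Feller even though (as noted) it fails to be strong Feller; combined with the tightness of the Ces\`aro averages from the drift bound, the Krylov--Bogolyubov argument then yields at least one invariant probability measure $\mu$.

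The heart of the proof is the uniqueness of $\mu$ and the asymptotic continuity (e-chain) property, which I would obtain by coupling. Run two copies $\{\lambda_t\}$, $\{\lambda_t'\}$ from initial values $\lambda_0\le\lambda_0'$ driven by the \emph{same} Poisson processes $N_t$, so that $Y_t=N_t(\lambda_t)\le N_t(\lambda_t')=Y_t'$ with $Y_t'-Y_t\sim\Pois(\lambda_t'-\lambda_t)$ conditionally. When no point of $N_t$ falls in the gap $(\lambda_t,\lambda_t']$---an event of conditional probability $\exp\{-(\lambda_t'-\lambda_t)\}$---one has $Y_t=Y_t'$, both chains select the \emph{same} regime $i$, and the gap contracts exactly by $a_i$; here the assumptions $a_1<1$ and $a_2<a_2+b_2<1$ guarantee $a_i\le\max(a_1,a_2)<1$ in either regime, so contraction holds whichever regime is visited. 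On the complementary ``decoupling'' event the gap can momentarily inflate through a regime switch, but its conditional probability is at most $\lambda_t'-\lambda_t$.

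Controlling the average contribution of these rare decoupling events---where the inflation involves the unbounded current state $\lambda_t$---is the main obstacle; I would handle it by weighting the distance with the Lyapunov function and invoking the moment bounds above, thereby showing that $\{P^n f\}$ is equicontinuous on compact sets for every bounded Lipschitz $f$, i.e.\ that $\{\lambda_t\}$ is an e-chain, and that two stationary versions couple so that $\mu$ is unique. Given a weak Feller e-chain that is bounded in probability and admits a unique invariant measure, the ergodic theory for e-chains (\citet{MeynTweedie1993}, Chapter~6; \citet{Duflo1997}, Sections~6.1--6.2) yields stability in the sense of the definition above, namely the empirical measures converge weakly to $\mu$ almost surely for every initial $\lambda_0$; this is part~1 together with part~2 for bounded continuous $\phi$. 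To pass to a $\mu$-a.s.\ continuous $\phi$ with $|\phi(\lambda)|\le c(1+\lambda^k)$, I would truncate, applying the bounded-continuous case to a smooth cutoff $\phi_M$ agreeing with $\phi$ on $\{\lambda\le M\}$ and dominating the remainder by $\frac1n\sum_{t=1}^n(1+\lambda_t^{k+1})$, whose almost-sure boundedness follows from the order-$(k+1)$ moment drift; letting $M\to\infty$ and using that the discontinuity set of $\phi$ is $\mu$-null (portmanteau) gives $\frac1n\sum_{t=1}^n\phi(\lambda_t)\to\mu(\phi)$ a.s.
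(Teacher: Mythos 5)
Your overall skeleton matches the paper's: a polynomial Lyapunov drift with limiting ratio $(a_2+b_2)^s$ (your direct Poisson computation, exploiting the exponential decay of $P(Y\le r\mid\lambda)$, is equivalent to the paper's argument via $N(u)/u\to1$ and uniform integrability), the weak Feller property (your uniformly convergent series of continuous terms is if anything cleaner than the paper's no-jump-event argument), Krylov--Bogolyubov for existence, the e-chain property for uniqueness, the e-chain ergodic theory of Duflo/Meyn--Tweedie for stability, and a truncation argument for the unbounded-$\phi$ SLLN (the paper simply cites Proposition 6.2.12 of Duflo, so your truncation is a legitimate expansion of that citation).

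The genuine gap is in the step you yourself flag as ``the main obstacle'': establishing the e-chain property and uniqueness. Your pathwise coupling through the shared Poisson processes $N_t$ runs into exactly the difficulty you describe --- on the regime-mismatch event (conditional probability $\le\lambda_t'-\lambda_t$) the gap inflates by a quantity of order $\lambda_t$, and moreover the monotone ordering $\lambda_{t+1}\le\lambda_{t+1}'$ can break, so the recursion must be restarted with the roles swapped. You propose to control this by a Lyapunov-weighted distance, but you do not carry out that estimate, and it is precisely the hard part of the coupling route (it is essentially the approach of Douc--Doukhan--Moulines that the paper explicitly contrasts itself with). The paper avoids the coupling entirely: it works directly on the kernel, using the elementary total-variation bound $\sum_{i}|p(i\mid x)-p(i\mid z)|\le 2(1-e^{-|x-z|})$ for Poisson mass functions to absorb the regime-mismatch contribution, and then an induction in $k$ in which the state gap contracts deterministically by $\bar a^s$ with $\bar a=\max\{a_1,a_2\}<1$, yielding $|\mathbf{P}^kf(x)-\mathbf{P}^kf(z)|\le\epsilon'+4\sum_{s\ge0}(1-e^{-\bar a^s|x-z|})\le\epsilon'+4|x-z|/(1-\bar a)$ uniformly in $k$. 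Note also that uniqueness for an e-chain requires a reachable state; the paper exhibits $\lambda^\ast=d_1/(1-a_1)$, reached by forcing $Y_t=0$ repeatedly, whereas your proposal derives uniqueness from the coupling of stationary versions --- which again rests on the estimate you have not supplied. As written, then, the existence, moment, and SLLN parts are sound, but the uniqueness/e-chain step is an unproven sketch rather than a proof.
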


The properties of the observed process $\left\{ Y_t \right\}$ can be deduced from the properties of $\left\{ \lambda_t \right\}$, as stated in the following corollary.

\begin{corollary}
  Suppose the assumptions of Theorem~\ref{thm:stable} hold, then the joint process $\left\{ (\lambda_t,Y_t) \right\}$ is stable and $\left\{ Y_t \right\}$ has finite moments of all orders.
  \label{prop:stable_y}
\end{corollary}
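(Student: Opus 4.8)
The plan is to transfer the stability and moment properties already proved for $\{\lambda_t\}$ in Theorem~\ref{thm:stable} to the joint chain by conditioning on the past. The crucial structural fact is that $\lambda_t$ is $\mathcal{F}_{t-1}$-measurable while, given $\mathcal{F}_{t-1}$, the count $Y_t=N_t(\lambda_t)$ is $\textrm{Poisson}(\lambda_t)$ with $N_t$ independent of $\mathcal{F}_{t-1}$. Hence for any bounded continuous function $g$ on the joint state space $[0,\infty)\times\mathbb{N}$ I would ``Poissonize'' the second coordinate and set
\[
\phi_g(\lambda)=\E\big[g(\lambda,N(\lambda))\big]=\sum_{y=0}^{\infty}g(\lambda,y)\,\frac{e^{-\lambda}\lambda^{y}}{y!},
\]
where $N(\cdot)$ is a unit-rate Poisson process, so that $\phi_g(\lambda_t)=\E[g(\lambda_t,Y_t)\mid\mathcal{F}_{t-1}]$. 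A short step is to check that $\phi_g$ is continuous on $[0,\infty)$ with $|\phi_g|\le\|g\|_\infty$: continuity follows from dominated convergence (the weights $e^{-\lambda}\lambda^y/y!$ sum to $1$ locally uniformly) and the bound is immediate. In particular $\phi_g$ is $\mu$-a.s.\ continuous and of polynomial growth with $k=0$, so part~2 of Theorem~\ref{thm:stable} applies to it.

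Next I would split the empirical average as
\[
\frac1n\sum_{t=1}^{n} g(\lambda_t,Y_t)=\frac1n\sum_{t=1}^{n}\phi_g(\lambda_t)+\frac1n\sum_{t=1}^{n}M_t,\qquad M_t:=g(\lambda_t,Y_t)-\phi_g(\lambda_t).
\]
By Theorem~\ref{thm:stable} the first term converges a.s.\ to $\mu(\phi_g)$. The sequence $\{M_t\}$ is a martingale difference sequence relative to $\{\mathcal{F}_t\}$ with $|M_t|\le2\|g\|_\infty$, so $\sum_t\var(M_t)/t^2<\infty$ and the martingale strong law gives $\frac1n\sum_{t=1}^n M_t\to0$ a.s. Consequently $\frac1n\sum_{t=1}^{n} g(\lambda_t,Y_t)\to \nu(g):=\mu(\phi_g)$ a.s., with the limit independent of $\lambda_0$ because $\mu$ is. Taking $g\equiv1$ shows $\nu$ is a probability measure on $[0,\infty)\times\mathbb{N}$, explicitly $\nu(d\lambda,\{y\})=(e^{-\lambda}\lambda^y/y!)\,\mu(d\lambda)$.

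To conclude stability of the joint chain I would upgrade this pointwise-in-$g$ convergence to weak convergence of the empirical measures $\Lambda_n(\omega,\cdot)=\frac1{n+1}\sum_{t=0}^n 1\{(\lambda_t,Y_t)\in\cdot\}$. Since the state space is Polish, there is a countable convergence-determining family $\{g_j\}$ of bounded continuous functions; intersecting the countably many full-probability events on which $\frac1n\sum g_j(\lambda_t,Y_t)\to\nu(g_j)$ produces a single event of full probability on which $\Lambda_n(\omega,\cdot)\Rightarrow\nu$, which is precisely stability of $\{(\lambda_t,Y_t)\}$ with stationary distribution $\nu$. The moment claim is then read off from $\nu$: under $\nu$ the law of $Y$ given $\lambda$ is $\textrm{Poisson}(\lambda)$, so for every integer $k\ge1$,
\[
\E_\nu[Y^k]=\int \E\big[N(\lambda)^k\big]\,\mu(d\lambda)=\int m_k(\lambda)\,\mu(d\lambda),
\]
where $m_k$, the $k$-th moment of $\textrm{Poisson}(\lambda)$, is a polynomial in $\lambda$ of degree $k$ (a Touchard polynomial). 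Since $\mu$ has finite moments of all orders by Theorem~\ref{thm:stable}, each $\E_\nu[Y^k]$ is finite.

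I expect the main obstacle to be the passage from convergence for each fixed test function to simultaneous weak convergence of the empirical measures on one common event of full probability; this is where separability of the state space and the choice of a countable convergence-determining class carry the argument, and where one must take care that the limiting set functional $\nu$ is genuinely a probability measure rather than merely a finitely additive limit. By comparison, the martingale strong law for $\{M_t\}$, the continuity-and-growth verification for $\phi_g$, and the moment identity are routine once $\nu$ has been identified.
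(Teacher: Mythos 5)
Your proposal is correct, and it diverges from the paper in an instructive way. For the moment claim the two arguments coincide: both condition on $\lambda_t$ and use that the $s$-th moment of a $\textrm{Poisson}(\lambda)$ variable is a polynomial in $\lambda$, then invoke the finiteness of all moments of $\mu$ from Theorem~\ref{thm:stable}. For the stability of the joint chain, however, the paper simply declares it ``clear'' and offers no argument, whereas you supply a genuine proof: you Poissonize each bounded continuous test function $g$ into $\phi_g(\lambda)=\E[g(\lambda,N(\lambda))]$, check that $\phi_g$ is bounded and continuous so that part~2 of Theorem~\ref{thm:stable} applies with $k=0$, absorb the residual $g(\lambda_t,Y_t)-\phi_g(\lambda_t)$ into a uniformly bounded martingale difference sequence killed by the martingale strong law, and then pass from convergence for each fixed $g$ to almost sure weak convergence of the empirical measures via a countable convergence-determining class on the Polish state space $[0,\infty)\times\mathbb{N}$. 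This also identifies the joint stationary law explicitly as $\nu(d\lambda,\{y\})=(e^{-\lambda}\lambda^y/y!)\,\mu(d\lambda)$, which the paper never writes down. All the steps you flag as delicate (the martingale SLLN under the bound $|M_t|\le 2\|g\|_\infty$, the measurability of $\lambda_t$ with respect to $\mathcal{F}_{t-1}$, the countable intersection of full-measure events) go through, so your version is a strictly more complete argument for the same statement; the only cost is length relative to the paper's one-line appeal to the reader.
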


Similar to Theorem~\ref{thm:stable}, the stability of the joint process ensures the law of large numbers holds for polynomial functions of $(\lambda_t,Y_t)$, which serves an important role in establishing the asymptotic theory of the estimators for the parameters in next section.

As is claimed that this model can produce negative autocorrelation, we conclude this section by some remarks on the autocorrelation function of this model. It turns out that an explicit formula of its autocorrelation function is very difficult to obtain, and to our best knowledge, no such result exists for time series models with thresholds. Based on the stability of the model, the claim can be verified by Monte Carlo simulations, since the sample autocorrelation is a consistent estimator for the theoretical autocorrelation. 
As to the theoretical property of the autocorrelation function, it can be proved that when $b_1$ is large enough, $\E\left( \lambda_{t}| \lambda_{t-1} \right)$ is a decreasing function of $\lambda_{t-1}$. Thus, it is likely that  $\lambda_t$ and $\lambda_{t-1}$ will vary in opposite directions with high probability and the pair $(Y_{t},Y_{t-1}) $ will display a negative correlation as $Y_t=N_t(\lambda_t)$ and $Y_{t-1}=N_{t-1}(\lambda_{t-1})$.

\section{Parameter estimation by maximum likelihood}
\label{estimation}

Suppose we have a series of observations $\left\{ Y_t \right\}_{t=1}^n$ generated from the self-excited threshold Poisson autoregression model and we want to estimate the parameters. Feasible approaches include the least squares estimator and the maximum likelihood estimator. Since the likelihood function for given observations $\left\{ Y_t \right\}_{t=1}^n$ can be easily calculated with an initial value of $\lambda_1$ and the maximum likelihood estimator is likely to be more efficient than the least square estimator, we only discuss the maximum likelihood estimator here.

Recall that $\theta^{(i)}=(d_{i},a_{i},b_{i})^\intercal$ is the parameter vector for the $i^{th}$ regime, $i=1,2$. Then $\theta=(r,\theta^{(1)}{}^\intercal ,\theta^{(2)}{}^\intercal)^\intercal$ denotes the vector of all parameters. Let $\theta_0$ be the true parameter vector.
Let $\lambda_{t,i}=d_{i}+a_{i}\lambda_{t-1}+b_{i}Y_{t-1}\; (i=1,2)$,
then $\lambda_{t}=\sum_i \lambda_{t,i}1\left\{ Y_t\in R_i \right\}$. Since the $\lambda_t$'s have to be calculated recursively, an initial value $\lambda_1$ is needed.

Fix an arbitrary initial value of $\lambda_1$, denoted by $\tilde{\lambda}_1$. Let $\{ \tilde{\lambda}_t \}_{t=2}^n$ be the sequence calculated by the recursive equation Eq~\eqref{lambda} with the initial value $\tilde{\lambda}_1$ and the observed data $\left\{ Y_t \right\}_{t=1}^n$. Then the log-likelihood function, apart from a constant, is
\begin{align*}
  \tilde{\ell}(\theta)&=\sum_{t=1}^n \tilde{\ell}_t(\theta),
\end{align*}
where
  $\tilde{\ell}_t=-\tilde{\lambda}_{t}+Y_t \log(\tilde{\lambda}_{t})$.

The maximum likelihood estimator of $\theta$ is
\begin{align}
  \hat{\theta}=\arg \max_{\theta\in ([0,r_*]\cap \mathbb{N})\times\mathcal{D} } \tilde{\ell} (\theta),\label{theta_hat}
\end{align}
where $r_*$ is some large positive integer and $\mathcal{D}$ is some compact subset of $\mathbb{R}^6$ which will be specified later.

To study the asymptotic behaviour of the estimator, we make the following assumption about the underlying process and the parameter space.\\
{\sc Assumption:}
\begin{itemize}
  \item [(A1)] The observed sequence $\left\{ Y_t \right\}_{t=1}^n$ is generated from the self-excited threshold Poisson autoregression process, with true parameter $\theta_0 \in ([0,r_*]\cap \mathbb{N})\times \mathcal{D}^o $, where $\mathcal{D}^o$ is the interior of $\mathcal{D}\subset \mathbf{\Theta}$, and $\mathbf{\Theta}=\{(d_1,a_1,b_1,d_2,a_2,b_2)^\intercal\in \R_+^6:~a_1<1,~b_1<1,~a_2+b_2<1\}$, where $\R_+$ is the strictly positive part of the real line.
\end{itemize}

\begin{remark}
The assumptions are quite natural and broad. Note the restriction of the parameters in the lower regime. Although it is shown in Corollary~\ref{prop:stable_y} that the joint process $\left\{ (\lambda_t,Y_t) \right\} $ is stable for any $b_1>0$, currently it is necessary to assume $b_1<1$ when proving the asymptotic properties of the maximum likelihood estimators. We conjecture that the same asymptotic properties would hold for parameters with $b_1\geq 1$ under other assumptions but leave it for future study. 
Nevertheless, the restricted parameter space still contains some explosive lower regime in the sense that $a_1+b_1>1$.
\end{remark}

Bearing in mind that the calculation of the log-likelihood $\tilde{\ell}(\theta)$ is based on an initial value of $\lambda_1$, in order to establish the asymptotic properties of $\hat{\theta}$, we need to show that the effect of selecting different initial value $\tilde{\lambda}_1$ is asymptotically negligible. 

To see this, note that the process can also be represented as a varying-coefficient Poisson autoregression model in the sense that the coefficients of the Poisson autoregression model vary with the past observation. Specifically, for a given parameter vector $\theta$, let $d_{t}=\sum_{i=1}^2 d_i 1\left\{ Y_t\in R_i \right\}$, $a_{t}=\sum_{i=1}^2 a_{i} 1\left\{ Y_t\in R_i \right\}$ and $b_{t}=\sum_{i=1}^2 b_{i} 1\left\{ Y_t\in R_i \right\}\;(t=1,\dots, n)$, assuming that no ambiguity shall be caused by the notation of $a_t$ and $b_t$ for $t=1,2$. Then $\lambda_t=\lambda_t(\theta)$ satisfies the recursive equation,
\begin{align}
    \lambda_t&=d_{t-1}+b_{t-1}Y_{t-1}+a_{t-1}\lambda_{t-1}\label{lambda_vc}\\
    &:=c_{t-1}+a_{t-1}\lambda_{t-1}\label{lambda_re}\\
    &=\sum_{k=1}^\infty \prod_{j=1}^{k-1}a_{t-j} c_{t-k} \label{lambda_sol}.
\end{align}

Eq~\eqref{lambda_re} defines a recursive equation of $\lambda_t$ assuming the process $\{Y_t\}$ and the vector $\theta$ is given. Let $\lambda_t=\lambda_t(\left\{ Y_t \right\},\theta)$ (with the same abbreviation) be the stationary solution as displayed in Eq~\eqref{lambda_sol}. $\tilde{\lambda}_t$ can be regarded as a stationary approximation, which is used in practical estimation. Let $\ell_t(\theta)=-\lambda_t(\theta)+Y_t\log(\lambda_t(\theta))$ and $\ell=\ell(\theta)=\sum_{t=1}^n \ell_t(\theta)$ be the corresponding quantities calculated from the stationary solution. 

The first major result is the strong consistency of $\hat{\theta}$ in Eq~\eqref{theta_hat} under the two assumptions about the process.

\begin{theorem}
  Under the assumption (A1),
   $\hat{\theta}$ is strongly consistent, i.e.,
 $
    \hat{\theta}\to \theta_0 \; a.s.
 $
  \label{consistency_para}
\end{theorem}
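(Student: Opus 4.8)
The plan is to treat $\hat\theta$ as an M-estimator and run a Wald-type consistency argument adapted to the mixed discrete/continuous parameter $\theta=(r,\theta^{(1)\intercal},\theta^{(2)\intercal})^\intercal$. Writing $L(\theta)=\E_{\theta_0}[\ell_t(\theta)]$ for the expectation under the stationary law, I would establish three facts: (i) the initialization is asymptotically irrelevant, $\sup_\theta n^{-1}|\tilde\ell(\theta)-\ell(\theta)|\to 0$ a.s.; (ii) a uniform strong law, $\sup_\theta |n^{-1}\ell(\theta)-L(\theta)|\to0$ a.s.; and (iii) identifiability, namely that $L$ is continuous and attains its strict global maximum at the unique point $\theta_0$. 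Combining (i)--(iii) with the compactness of $\mathcal{D}$ and the finiteness of the admissible thresholds $\{0,\dots,r_*\}$ then forces $\hat\theta\to\theta_0$ a.s.

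For (i), observe from Eq~\eqref{lambda_re} that $\tilde\lambda_t$ and $\lambda_t$ obey the same affine recursion with identical coefficients $(a_{t-1},c_{t-1})$ and differ only through the initial value, so $\tilde\lambda_t-\lambda_t=\big(\prod_{j=1}^{t-1}a_{t-j}\big)(\tilde\lambda_1-\lambda_1)$. On the compact set $\mathcal{D}$ one has $a_i\le\bar a<1$ and $\lambda_t,\tilde\lambda_t\ge\underline d>0$ uniformly in $\theta$, whence $|\tilde\ell_t-\ell_t|\le(1+Y_t/\underline d)\,\bar a^{\,t-1}|\tilde\lambda_1-\lambda_1|$. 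Since $\{Y_t\}$ has finite moments of all orders by Corollary~\ref{prop:stable_y}, $\sum_t(1+Y_t/\underline d)\bar a^{\,t-1}<\infty$ a.s., which gives (i) (indeed the summed discrepancy is a.s.\ bounded, uniformly over the finitely many thresholds and over $\mathcal{D}$). For (ii), for each fixed $\theta$ the sequence $\{\ell_t(\theta)\}$ is a measurable function of the stationary ergodic process $\{Y_s\}$, so the stability supplied by Theorem~\ref{thm:stable} and Corollary~\ref{prop:stable_y} yields the pointwise law $n^{-1}\ell(\theta)\to L(\theta)$. Upgrading to uniform convergence requires viewing $\theta\mapsto\ell_t(\theta)$ as a stationary ergodic $C(\mathcal{D})$-valued sequence with $\E\sup_{\mathcal{D}}|\ell_t(\theta)|<\infty$; this domination follows from the lower bound $\lambda_t(\theta)\ge\underline d>0$, the geometric control of $\lambda_t(\theta)$ via $a_i\le\bar a<1$ in Eq~\eqref{lambda_sol}, and the finiteness of all moments of $\{Y_t\}$.

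The crux is (iii), which I expect to be the main obstacle. Conditioning on $\mathcal{F}_{t-1}$ and using $\mathcal{L}(Y_t\mid\mathcal{F}_{t-1})=\Pois(\lambda_t(\theta_0))$ gives $L(\theta_0)-L(\theta)=\E\big[\,g(\lambda_t(\theta_0),\lambda_t(\theta))\,\big]$ with the Poisson Bregman divergence $g(x,y)=x\log(x/y)-x+y\ge0$, which vanishes iff $x=y$. Hence $L(\theta)\le L(\theta_0)$, with equality iff $\lambda_t(\theta)=\lambda_t(\theta_0)$ a.s. The remaining implication $\lambda_t(\theta)=\lambda_t(\theta_0)$ a.s.$\Rightarrow\theta=\theta_0$ is where the work lies. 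I would exploit that, conditionally on $\{Y_{t-1}=y\}$, equality of the intensities reads as an affine identity in $\lambda_{t-1}$; since $\lambda_{t-1}$ assumes infinitely many values with positive probability on this event (richness of the support of the stationary intensity), one matches the coefficient of $\lambda_{t-1}$ and the constant term separately, and varying $y$ over two distinct counts lying in a common regime then separates $d_i$ from $b_i$, yielding $\theta^{(i)}=\theta_0^{(i)}$ once the thresholds coincide. To force $r=r_0$ I would argue by contradiction: if $r\ne r_0$, some count $y$ falls in different regimes under $\theta$ and $\theta_0$, and the affine identity there, combined with the already matched coefficients, would force $\theta_0^{(1)}=\theta_0^{(2)}$, contradicting the genuine two-regime assumption $\theta_0^{(1)}\ne\theta_0^{(2)}$.

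Finally I would assemble the pieces. From (i)--(ii) one has $\sup_\theta|n^{-1}\tilde\ell(\theta)-L(\theta)|\to0$ a.s. For each fixed threshold $r$, continuity of $L$ and compactness of $\mathcal{D}$ deliver, by the standard Wald argument, convergence of the continuous-parameter maximizer to the maximizer of $L(r,\cdot)$. Because the thresholds range over the finite set $\{0,\dots,r_*\}$ and (iii) makes $\theta_0$ the unique global maximizer, the value $\max_{\theta^{(1)},\theta^{(2)}}L(r,\cdot)$ lies strictly below $L(\theta_0)$ for every $r\ne r_0$; with only finitely many thresholds this gap is uniform, so the estimated threshold satisfies $\hat r=r_0$ eventually a.s. On that event the continuous components converge to $(\theta_0^{(1)},\theta_0^{(2)})$, and therefore $\hat\theta\to\theta_0$ a.s.
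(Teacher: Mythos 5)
Your proposal is correct and follows essentially the same route as the paper: geometric forgetting of the initial value $\tilde\lambda_1$, a Wald-type compactness argument built on the ergodicity supplied by Theorem~\ref{thm:stable} and Corollary~\ref{prop:stable_y}, and identifiability via Jensen's inequality (Kullback--Leibler/Bregman divergence) reduced to matching the affine coefficients of $\lambda_{t-1}$ and $Y_{t-1}$ regime by regime. The only cosmetic differences are that you state a full uniform law of large numbers where the paper uses the one-sided $\sup$ plus finite-cover device of Francq--Zako\"{\i}an, and your treatment of the threshold identification (by contradiction, forcing $\theta_0^{(1)}=\theta_0^{(2)}$ if $r\ne r_0$) is if anything a more explicit articulation of the paper's Eq.~\eqref{diff_lambda} argument.
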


Since the threshold $r$ is integer-valued, the consistency of $\hat{r}$ implies that $\hat{r} = r$ eventually. Therefore, the efficiency of the other estimates with the threshold being estimated together is asymptotically the same as that when the threshold is known.
We henceforth remove $r$ from the parameter vector $\theta$ and only consider a central limit theorem for the maximum likelihood estimator with known threshold $r$.
Under this setting, $\tilde{\ell}$ is differentiable with respect to $\theta$, and the score function can be calculated using the varying-coefficient representation of $\lambda_t$ as in Eq~\eqref{lambda_vc}.

The score function is
\begin{align*}
  \tilde{S}_n(\theta)&=\frac{\partial \tilde{\ell}(\theta)}{\partial \theta}
  =\sum_{t=1}^n (\frac{Y_t}{\tilde{\lambda}_{t}}-1) \frac{\partial \tilde{\lambda}_{t}}{\partial \theta},
\end{align*}
where
\begin{align}
  \frac{\partial \tilde{\lambda}_{t}}{\partial \theta}
=\left( 
\begin{array}{c}
  \frac{\partial \tilde{\lambda}_{t}}{\partial \theta^{(1)}}\\
  \frac{\partial \tilde{\lambda}_{t}}{\partial \theta^{(2)}}
\end{array}
 \right),\label{lambda_deri1}
\end{align}
and
\begin{align}
  \frac{\partial \tilde{\lambda}_{t}}{\partial \theta^{(i)}}
  = (1,\tilde{\lambda}_{t-1},Y_{t-1})^\intercal 1\left\{ Y_{t-1}\in R_i \right\} +a_{t-1} \frac{\partial \tilde{\lambda}_{t-1}}{\partial \theta^{(i)}}, \textrm{ for } i=1,2.\label{lambda_deri2}
\end{align}

Let 
\begin{align*}
  G=\E
    \left[ \frac{1}{\lambda_t}\left(\frac{\partial \lambda_t}{\partial\theta}\right)\left( \frac{\partial \lambda_t}{\partial\theta}
    \right)^\intercal
 \right],
\end{align*}
 then we state the asymptotic normality of the maximum likelihood estimator in the following theorem.

\begin{theorem}
  Under the assumption (A1) except that the threshold $r$ is known, the maximum likelihood estimator $\hat{\theta}=( (\hat{\theta}^{(1)})^\intercal,(\hat{\theta}^{(2)})^\intercal)^\intercal $ is asymptotically normal,
  \begin{align*}
    \sqrt{n}(\hat{\theta}-\theta_0)\xrightarrow{d} N(0, G^{-1}).
  \end{align*}
  \label{asymptotics_theta}
Furthermore, the matrix $G$ can be estimated consistently by 
\begin{align}
\widehat{G}= \frac{1}{n}\sum_{i=1}^n \frac{1}{\tilde{\lambda}_{t}}\left(\frac{\partial\tilde{\lambda}_{t}}{\partial \theta}\right) \left( \frac{\partial \tilde{\lambda}_{t}}{\partial \theta}\right)^\intercal.\label{g_hat}
\end{align}
\end{theorem}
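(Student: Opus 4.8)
The plan is to establish asymptotic normality by the standard Taylor-expansion argument for M-estimators, using the score function $\tilde{S}_n(\theta)$ and its derivative, while carefully handling the fact that the likelihood is computed from the approximating sequence $\{\tilde{\lambda}_t\}$ rather than the stationary solution $\{\lambda_t\}$. First I would introduce the ``ideal'' score $S_n(\theta)=\sum_{t=1}^n (Y_t/\lambda_t-1)\,\partial\lambda_t/\partial\theta$ based on the stationary process, together with its negative Hessian $-\partial S_n/\partial\theta$, and expand $S_n(\hat\theta)=0$ around $\theta_0$ to obtain the representation $\sqrt{n}(\hat\theta-\theta_0)=\big(-\tfrac1n\,\partial S_n(\theta_n^*)/\partial\theta\big)^{-1}\tfrac{1}{\sqrt n}S_n(\theta_0)$, where $\theta_n^*$ lies between $\hat\theta$ and $\theta_0$. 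The consistency from Theorem~\ref{consistency_para} guarantees $\theta_n^*\to\theta_0$ a.s.

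The second step is the central limit theorem for the normalized score at the truth. Writing $S_n(\theta_0)=\sum_t \xi_t$ with $\xi_t=(Y_t/\lambda_t-1)\,\partial\lambda_t/\partial\theta$, each summand is a martingale difference with respect to $\mathcal{F}_{t-1}$, since $\E(Y_t\mid\mathcal{F}_{t-1})=\lambda_t$ and $\partial\lambda_t/\partial\theta$ is $\mathcal{F}_{t-1}$-measurable. I would invoke a martingale central limit theorem, for which the two ingredients are: the conditional variance $\E(\xi_t\xi_t^\intercal\mid\mathcal{F}_{t-1})=\lambda_t^{-1}(\partial\lambda_t/\partial\theta)(\partial\lambda_t/\partial\theta)^\intercal$ converging to $G$ in the Cesàro sense, and a Lindeberg condition. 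The convergence $\tfrac1n\sum_t \E(\xi_t\xi_t^\intercal\mid\mathcal{F}_{t-1})\to G$ follows directly from the strong law of large numbers in Corollary~\ref{prop:stable_y} applied to the stationary, polynomially-bounded function $\lambda_t^{-1}(\partial\lambda_t/\partial\theta)(\partial\lambda_t/\partial\theta)^\intercal$; here I must verify that this function has the required polynomial growth, which rests on controlling the moments of $\partial\lambda_t/\partial\theta$ through the geometric representation in Eq~\eqref{lambda_sol} together with $a_i<1$. The Lindeberg condition reduces to a uniform-integrability-type bound that again follows from the existence of moments of all orders guaranteed by Theorem~\ref{thm:stable} and Corollary~\ref{prop:stable_y}.

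For the Hessian term, I would show $-\tfrac1n\,\partial S_n(\theta_n^*)/\partial\theta\to G$ a.s., by expanding the derivative, identifying the leading stationary term whose limit is $G$ via the law of large numbers, and controlling a remainder term that is uniformly small near $\theta_0$. This requires a uniform (in a neighborhood of $\theta_0$) law of large numbers, which I would obtain by bounding the derivative of the Hessian summands by an integrable dominating function and appealing to Corollary~\ref{prop:stable_y}. Combining the CLT for the score with the convergence of the Hessian and Slutsky's theorem yields $\sqrt{n}(\hat\theta-\theta_0)\xrightarrow{d}N(0,G^{-1})$.

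The main obstacle, and the step requiring the most care, is replacing every quantity computed from $\{\tilde\lambda_t\}$ by its stationary counterpart $\{\lambda_t\}$, so that $\tilde S_n$ and $S_n$ (and likewise their Hessians) differ negligibly after normalization. The key estimate is that $|\tilde\lambda_t-\lambda_t|$ and $\|\partial\tilde\lambda_t/\partial\theta-\partial\lambda_t/\partial\theta\|$ decay geometrically, uniformly over $\theta\in\mathcal{D}$, because the recursion contracts at rate $a_{t-1}<1$; summing a geometrically decaying sequence gives $\tfrac{1}{\sqrt n}(\tilde S_n-S_n)(\theta_0)\to 0$ and the analogous Hessian difference $\to 0$. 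This is precisely where the assumption $b_1<1$ in (A1) enters, since it is needed to bound $Y_{t-1}$-driven terms and keep the contraction argument valid uniformly over the parameter space. Finally, the consistency of $\widehat{G}$ in Eq~\eqref{g_hat} follows from the same substitution argument combined with the law of large numbers already used to identify $G$.
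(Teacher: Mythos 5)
Your proposal follows essentially the same route as the paper: a Taylor expansion of the (approximate) score around $\theta_0$, a martingale central limit theorem for the stationary score with limiting covariance $G$, a uniform law of large numbers for the Hessian controlled by an integrable bound on the third derivatives, and a geometric-decay argument showing that the effect of the initial value on $\tilde{\lambda}_t$ and its derivatives is asymptotically negligible (the paper's statements (S1)--(S5)). The only cosmetic differences are that the paper invokes a stationary-ergodic martingale CLT (Billingsley, Theorem 18.1, via the Cram\'er--Wold device) rather than verifying a Lindeberg condition directly, and that the contraction driving the substitution step rests on $\max\{a_1,a_2\}<1$ rather than on $b_1<1$ as you suggest.
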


\begin{remark}
  
Since $r\in \mathbb{N}$, $\tilde{\ell}$ is not differentiable with respect to the threshold variable $r$. In practice, the maximization of the log-likelihood function can be done in the following two steps.

\begin{enumerate}[Step (1):]
  \item For each $r\in[0,r_*]\cap \mathbb{N}$, find $\theta^{(i)}_r$ such that
    \[
    (\hat{\theta}^{(1)}_r,\hat{\theta}^{(2)}_r)=\arg \max_{(\theta^{(1)},\theta^{(2)})\in \mathcal{D}} \tilde{\ell}(r,\theta^{(1)},\theta^{(2)}).
    \]
  \item The threshold is estimated by searching over all candidates
  \[
  \hat{r}=\arg \max_{r\in[0,r_*]\cap\mathbb{N}} \tilde{\ell}(r,\hat{\theta}^{(1)}_r,\hat{\theta}^{(2)}_r),
  \]
  and the final estimate for $\theta^{(i)} $ is $\hat{\theta}^{(i)}_{\hat{r}}\;(i=1,2)$.
\end{enumerate}
\end{remark}
\begin{remark}
Since the threshold is searched over the set of candidates $[0,r_*]$, the upper bound $r_*$ should be large enough so that the set includes the true threshold. However, since the computation time of the estimation procedure increases approximately linearly with respect to the number of candidates, $r_*$ cannot be too large when computation resource is limited. Also, when the bound is too broad, there might not be enough number of observations to ensure consistent estimation. A strategy frequently used in practice is to replace the upper bound $r_*$ as well as the lower bound $0$ by some numbers determined based on the data (cf. \citet{ChengLiYuZhouWangLo2011}). Specifically, fix $\alpha_1<\alpha_2\in(0,1)$ and find the empirical $\alpha_i$-th quantile for $Y_t$, $\hat{q}_i$. Then the interval $[0,r_*]$ is replaced by $[\hat{q}_1,\hat{q}_2]$. The choice of the pair $(\alpha_1,\alpha_2)$ can be $(0.2,0.8)$ or more conservatively $(0.1,0.9)$.
\end{remark}

\section{Simulation study and real data analysis}
\label{sec:sim}
We report the simulation study with two sets of parameters and one real data analysis in this section. 

A two-step estimation procedure is applied as indicated in Section~\ref{estimation}. First we fix $\alpha_1=0.2$ and $\alpha_2=0.8$ and find the empirical $\alpha_i$-quantile of $\left\{ Y_i \right\}_{i=1}^n$, $\hat{q}_i\;(i=1,2)$. Then, for a given threshold candidate, $r\in[\hat{q}_1,\hat{q}_2]\cap \mathbb{N}$, we supply the negative log-likelihood function and its gradient to E04UCF, a NAG Fortran subroutine designed to minimize a smooth function subject to constraints, to obtain the parameter estimate $\hat{\theta}_r$ for the given $r$. The final estimate is obtained by selecting $r$ and the corresponding $\hat{\theta}_r$ which minimizes the negative log-likelihood function. 
\subsection{Simulation study}
Two sets of parameters are considered in our simulation. The true parameter values are listed under Table~\ref{tab:s1} and Table~\ref{tab:s2} respectively. The first parameter set has both regimes stationary, while the second one has an explosive lower regime and negative serial dependence, as illustrated in Table~\ref{tab:nacf}.

\begin{table}
 \begin{center}
  \begin{tabular}{cccccc}
    Lag & $1$ 		& $2$ 	& $3$ 	& $4$ 	& $5$ \\
    ACF & $-0.104$ 	& $-0.074$ 	& $0.015$ 	& $-0.047$ 	& $0.099$ 
  \end{tabular}
  \caption{The autocorrelation function of a sample path simulated with the second parameter set and 500 observations.}
  \label{tab:nacf}
\end{center}
\end{table}

We are interested in checking the following points. The estimated threshold is expected to be identical to the true value when sample size is sufficiently large. The parameters for each regime are consistent and asymptotically normal, so we would like to see whether its sample mean and sample variance are close to the true ones. However, since no explicit form for the asymptotic variance is available, its inverse is estimated by $\widehat{G}$ as in Eq~\eqref{g_hat}. For each set of parameters, 1000 sample paths are simulated. Then for each sample path, one estimate of $\theta$, $\hat{\theta}$, and one copy of the asymptotic covariance matrix $\widehat{G}^{-1}$ are obtained. By the asymptotic result and the law of large numbers we have $n\overline{\textrm{cov}}(\hat{\theta}) \approx \overline{\widehat{G}^{ -1 }}$, where $\overline{\widehat{G}^{-1}}$ is the sample mean of $\widehat{G}^{-1}$ over the 1000 replications. The sample covariance matrix is of course dependent on the length of sample path, however, $n\overline{\textrm{cov}}(\hat{\theta})$ and $\overline{\widehat{G}^{-1}}$ should be approximately equal to a constant matrix independent of $n$ provided that $n$ is sufficiently large.

The simulation results for the two sets of parameters are reported in Table~\ref{tab:s1} and Table~\ref{tab:s2} respectively. Some interesting observations can be made. In general, $\hat{r}$ converges to $r$ very fast. However the speed of this convergence seems to depend on other parameters. For the first set of parameters, even when $n$ is as large as $3000$, $\hat{r}$ does not equal to $r$ in rare samples. However, $\hat{r}$ is identical to the true value when sample size is $500$ for the second set of parameters, which is a moderate sample size for a threshold model. 

The consistency and asymptotic variance of the other parameters are confirmed in both examples. The average estimated parameters are close to the true values and the accuracy increases as the sample size increases. However, the intercept parameters $d_{i}$ seem to have large variances, comparing to the other parameters. This phenomenon is also found in the Poisson autoregression model \citep{FokianosRahbekTjostheim2009}.
In the first example, $n\overline{\textrm{cov}}(\hat{\theta})$ and $\overline{\hat{G}^{-1}}$ match each other reasonably well. Such phenomenon is not so apparent in the second example, especially for $d_{i}$. This might be due to the fact that the lower regime is explosive in the second example.

\begin{table}
  \begin{center}
\begin{tabular}{ccccccccc}
Sample size & Description & $r$ & $d_1$ & $a_{1}$ & $b_{1}$ & $d_2$ & $a_2$ & $b_{2}$\\
\hline
&  $\theta_0$ & 7 & 0$\cdot$50 & 0$\cdot$70 & 0$\cdot$20 & 0$\cdot$30 & 0$\cdot$40 & 0$\cdot$50 \\
\hline
\multirow{3}{*}{$n=500$}	
& $\overline{\hat{\theta}}$ & 6$\cdot$80 & 0$\cdot$63 & 0$\cdot$69 & 0$\cdot$18 & 0$\cdot$83 & 0$\cdot$37 & 0$\cdot$47\\
& $n\overline{\textrm{cov}}(\hat{\theta})$& 1100 & 53 & 2$\cdot$34 & 1$\cdot$76 & 416 & 7$\cdot$69 & 6$\cdot$45\\
& $\overline{\widehat{G}^{-1}}$ & N/A    & 40$\cdot$8 & 2$\cdot$03 & 2$\cdot$32 & 444 & 6$\cdot$45 & 5$\cdot$60 \\
\hline
\multirow{3}{*}{$n=1000$}	
& $\overline{\hat{\theta}}$ & 7$\cdot$00 & 0$\cdot$56 & 0$\cdot$70 & 0$\cdot$19 & 0$\cdot$60 & 0$\cdot$38 & 0$\cdot$48\\
& $n\overline{\textrm{cov}}(\hat{\theta})$& 503$\cdot$5 & 34$\cdot$5 & 1$\cdot$85 & 2$\cdot$21 & 433 & 6$\cdot$84 & 6$\cdot$01\\
& $\overline{\widehat{G}^{-1}}$ & N/A     & 28$\cdot$9 & 1$\cdot$73 & 1$\cdot$79 & 405 & 5$\cdot$16 & 5$\cdot$46 \\
\hline
\multirow{3}{*}{$n=2000$}
& $\overline{\hat{\theta}}$ & 7$\cdot$02 & 0$\cdot$53 & 0$\cdot$70 & 0$\cdot$20 & 0$\cdot$42 & 0$\cdot$39 & 0$\cdot$49\\
& $n\overline{\textrm{cov}}(\hat{\theta})$&  123  & 26$\cdot$2 & 1$\cdot$72 & 1$\cdot$90 & 288 & 4$\cdot$80 & 4$\cdot$76\\
& $\overline{\widehat{G}^{-1}}$ & N/A     & 25$\cdot$6 & 1$\cdot$62 & 1$\cdot$63 & 349 & 4$\cdot$78 & 5$\cdot$18 \\
\hline
\multirow{3}{*}{$n=3000$} & $\overline{\hat{\theta}}$ & 7$\cdot$00 & 0$\cdot$52 & 0$\cdot$70 & 0$\cdot$20 & 0$\cdot$37 & 0$\cdot$40 & 0$\cdot$50\\
& $n\overline{\textrm{cov}}(\hat{\theta})$& 5 & 26$\cdot$8 &  1$\cdot$76  &   1$\cdot$76 &  266 &  5$\cdot$33& 4$\cdot$99\\
& $\overline{\widehat{G}^{-1}}$ & N/A  & 24$\cdot$5&  1$\cdot$61&  1$\cdot$61&  332&  4$\cdot$64&  5$\cdot$05 \\
\hline
  \end{tabular}
  \caption{Simulation 1. The true parameters are in the row with description $\theta_0$. For each sample size, 1000 replications are simulated. Then the mean of estimates, sample size times the variance of estimates and mean of asymptotic variances (if available) are reported respectively.}
  \label{tab:s1}
\end{center}
\end{table}

\begin{table}
  \begin{center}
\begin{tabular}{ccccccccc}
Sample size & Description & $r$ & $d_1$ & $a_{1}$ & $b_{1}$ & $d_2$ & $a_2$ & $b_{2}$\\
\hline
&  $\theta_0$ & 6 & 0$\cdot$50 & 0$\cdot$80 & 0$\cdot$70 & 0$\cdot$20 & 0$\cdot$20 & 0$\cdot$10 \\
\hline
\multirow{3}{*}{$n=500$}	
& $\overline{\hat{\theta}}$ & 6$\cdot$00 & 0$\cdot$47 & 0$\cdot$82 & 0$\cdot$69 & 0$\cdot$32 & 0$\cdot$19 & 0$\cdot$09\\
& $n\overline{\textrm{cov}}(\hat{\theta})$ & 0 & 28$\cdot$17 & 3$\cdot$36 & 2$\cdot$56 & 64$\cdot$96 & 1$\cdot$57 & 1$\cdot$74\\
& $\overline{\widehat{G}^{-1}}$ & N/A & 34$\cdot$05 & 3$\cdot$52 & 2$\cdot$52 & 133$\cdot$27 & 1$\cdot$73 & 1$\cdot$48 \\
\hline
\multirow{3}{*}{$n=1000$}	
& $\overline{\hat{\theta}}$ & 6$\cdot$00 & 0$\cdot$50 & 0$\cdot$81 & 0$\cdot$70 & 0$\cdot$28 & 0$\cdot$20 & 0$\cdot$09\\
& $n\overline{\textrm{cov}}(\hat{\theta})$& 0  & 30$\cdot$29 & 3$\cdot$35 & 2$\cdot$40 & 75$\cdot$55 & 1$\cdot$61 & 1$\cdot$27\\
& $\overline{\widehat{G}^{-1}}$ & N/A     & 33$\cdot$65 & 3$\cdot$48 & 2$\cdot$52 & 133$\cdot$54 & 1$\cdot$73 & 1$\cdot$47 \\
\hline
\multirow{3}{*}{$n=2000$}	
& $\overline{\hat{\theta}}$ & 6$\cdot$00 & 0$\cdot$50 & 0$\cdot$80 & 0$\cdot$70 & 0$\cdot$23 & 0$\cdot$20 & 0$\cdot$10\\
& $n\overline{\textrm{cov}}(\hat{\theta})$&  0   & 29$\cdot$36 & 3$\cdot$28 & 2$\cdot$47 & 82$\cdot$68 & 1$\cdot$46 & 1$\cdot$21\\
& $\overline{\widehat{G}^{-1}}$ & N/A    & 33$\cdot$32 & 3$\cdot$45 & 2$\cdot$50 & 133$\cdot$90 & 1$\cdot$74 & 1$\cdot$47 \\
\hline
\multirow{3}{*}{$n=3000$} & $\overline{\hat{\theta}}$ & 6$\cdot$00 & 0$\cdot$50 & 0$\cdot$80 & 0$\cdot$70 & 0$\cdot$22 & 0$\cdot$20 & 0$\cdot$10\\
& $n\overline{\textrm{cov}}(\hat{\theta})$& 0 & 32$\cdot$56 & 3$\cdot$64 & 2$\cdot$53 & 98$\cdot$93 & 1$\cdot$57 & 1$\cdot$43\\
& $\overline{\widehat{G}^{-1}}$  & N/A  & 33$\cdot$12 & 3$\cdot$44 & 2$\cdot$50 & 133$\cdot$65 & 1$\cdot$73 &  1$\cdot$48 \\
\hline
  \end{tabular}
  \caption{Simulation 2. The true parameters are in the row with description $\theta_0$. For each sample size, 1000 replications are simulated. Then the mean of estimates, sample size times the variance of estimates and mean of asymptotic variances (if available) are reported respectively.}
  \label{tab:s2}
\end{center}
\end{table}

\subsection{Analysis of annual counts of major earthquakes in the world}

In this example we study the series of annual counts of major earthquakes with magnitude 7 (inclusive) or above during 1900 -- 2010, which is plotted in Figure~\ref{fig:all_earthquakes}. The data from 1900 to 2006 can be found in page 4 of \citet{ZucchiniMacDonald2009}, and the rest is extracted from 
the website of U.S. Geological Survey. The sample mean and sample variance are 19$\cdot$30 and 50$\cdot$37 respectively, showing considerable over-dispersion. The marginal distribution of $\left\{ Y_t \right\} $ in a self-excited threshold Poisson autoregression is highly expected to be non-Poissonian. It also displays strong positive serial dependence, as can be seen in Figure~\ref{fig:acf_earthquake}.

\begin{figure}
\begin{center}
\includegraphics[scale=.5]{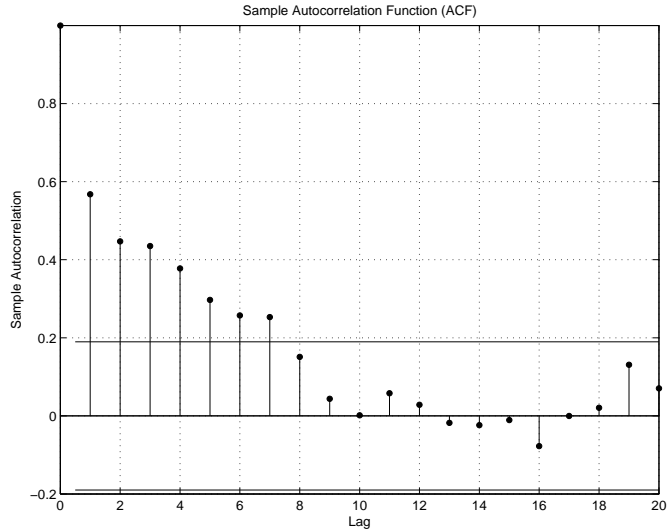}
\end{center}
\caption{ACF of the earthquake data.}
\label{fig:acf_earthquake}
\end{figure}

The series has been studied with hidden Markov models with discrete states by \citet{ZucchiniMacDonald2009}. Here we would like to compare the performances of the Poisson autoregression (PAR) versus the self-excited threshold Poisson autoregression for this data set. It is interesting to note that the (threshold) Poisson autoregression is also a hidden Markov chain but with continuous states. In order to compare the out-of-sample performances, the first 100 observations are used to estimate the parameters, while the last 11 are used to calculate the out-of-sample mean square error (MSE), serving as an assessment to model performance. The estimation results are shown in Table~\ref{tab:summary_earthquakes}. 

The self-excited threshold Poisson autoregression outperforms the ordinary Poisson autoregression according to AIC, in-sample MSE, and out-of-sample MSE. By BIC the Poisson autoregression seems to be better, which is understandable, since BIC is very conservative when selecting models with more parameters. In the threshold case, all parameter estimates are significantly different from zero, except that $d_2 $ is marginally significant and $b_{2}$=0$\cdot$001, which in fact is the lower bound for $b_{2}$ in our algorithm for estimating the parameters. The same threshold model with $b_{2}=0$ is also fitted, but the result remains almost the same, as can be seen in Table~\ref{tab:summary_earthquakes}. The basic statistics of the Pearson's residual which is defined as $(Y_t-\hat{\lambda}_t)/\sqrt{\hat{\lambda}_t}$ under the self-excited threshold Poisson autoregression model are summarized in Table~\ref{tab:pr}, and its ACF is plotted in Figure~\ref{fig:acf_residual}, which shows that there is no virtually significant serial dependence in the residual sequence.

The original data and the fitted series by the two models are plotted in Figure~\ref{fig:all_earthquakes}. It is observed that the threshold model fits the data better when $Y_t$ is large, i.e., its improvement are mainly in the upper regime. If more data were available, a Poisson autoregression with two or more thresholds might be considered. However, insufficiency of data is very likely to result in unreliable parameter estimates, so we content ourselves with the present model. 

A closer look at the fitted parameters reveals the possible different dynamics of the underlying process according to the threshold. Note that the estimated threshold is 25, which is quite large. The difference between the intercepts, $d_1$=3$\cdot$27 versus $d_2$=14$\cdot$33, implies that large number of major earthquakes in one year is very likely to be followed by a lot of earthquakes during the following year. Another notable feature is that $b_{2}=0$, showing that once a large number is observed, the conditional mean of the process would be stably large with less fluctuations comparing to the lower regime in which the conditional mean depends on both the latent mean process and the realized observations. For the earthquake data, this means that more earthquakes will be expected in the next few years once a large number of major earthquakes are observed in a year, as during the years 1942 -- 1950 and 1968 -- 1970.

\begin{table}
\begin{tabular}{ccccc}
& PAR & SETPAR & SETPAR (with $b_{2}=0$) \\
$d_1$ & 2$\cdot$96 (1$\cdot$21) 	& 3$\cdot$27 (1$\cdot$36) & 3$\cdot$27 (1$\cdot$36) \\
$a_{1}$ & 0$\cdot$47 (0$\cdot$11) 	& 0$\cdot$49 (0$\cdot$12) & 0$\cdot$49 (0$\cdot$12) \\
$b_{1}$ & 0$\cdot$39 (0$\cdot$07) 	& 0$\cdot$33 (0$\cdot$10) & 0$\cdot$33 (0$\cdot$10) \\
$d_2$& 		& 14$\cdot$30 (7$\cdot$45) & 14$\cdot$33 (7$\cdot$45)\\
$a_2$& 		& 0$\cdot$52 (0$\cdot$20) & 0$\cdot$52 (0$\cdot$20) \\
$b_{2}$& 			& 0$\cdot$001 (0$\cdot$26) & \\
$r$& & 25 & 25 \\
Average log-likelihood  & 39$\cdot$85 & 39$\cdot$89 & 39$\cdot$89\\
AIC 			& -7883$\cdot$5 & -7885$\cdot$1 & -7887$\cdot$1  \\
BIC 			& -7875$\cdot$7 & -7866$\cdot$9 & -7871$\cdot$5 \\
In-sample MSE & 33$\cdot$12 & 30$\cdot$7 & 30$\cdot$7 \\
Out-of-sample MSE & 13$\cdot$4 & 12$\cdot$8 & 12$\cdot$8\\
\end{tabular}
\caption{Summary of model estimates. Standard errors (if available) are in parenthesis. }
\label{tab:summary_earthquakes}
\end{table}

\begin{figure}
\begin{center}
\includegraphics[scale=0.3]{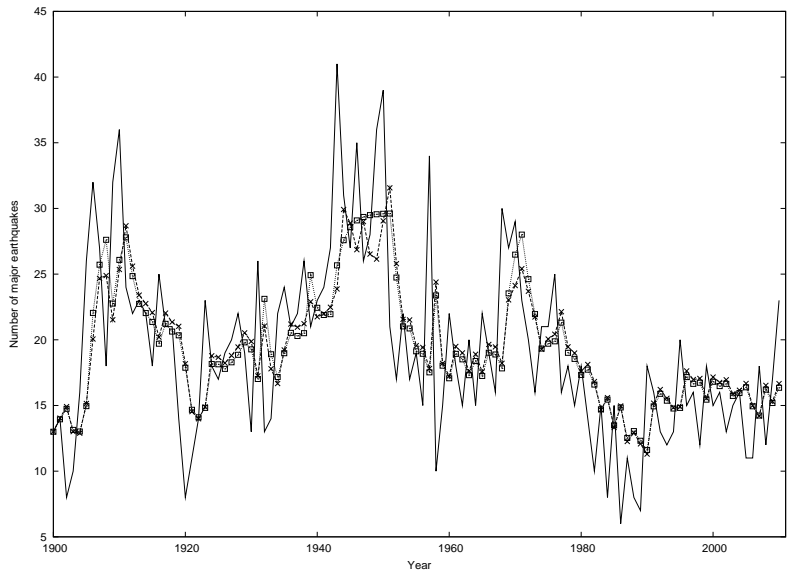}
\end{center}
\caption{Plot of fitted curves of the earthquake data: The original observations are solid, the series fitted by Poisson autoregression is marked by crosses and that fitted by the self-excited threshold Poisson autoregression is marked by squares.}
\label{fig:all_earthquakes}
\end{figure}

\begin{table}
  \begin{center}
\begin{tabular}{cccc}
Mean & Standard error & Skewness & Excess kurtosis\\
-0$\cdot$02 & 1$\cdot$219 & 0$\cdot$537 & 0$\cdot$429
\end{tabular}
\caption{Statistics summary of the Pearson residuals of the earthquake data fitted by the self-excited threshold Poisson autoregression model.}
\label{tab:pr}
\end{center}
\end{table}

\begin{figure}
\begin{center}
\includegraphics[scale=0.35,angle=-90]{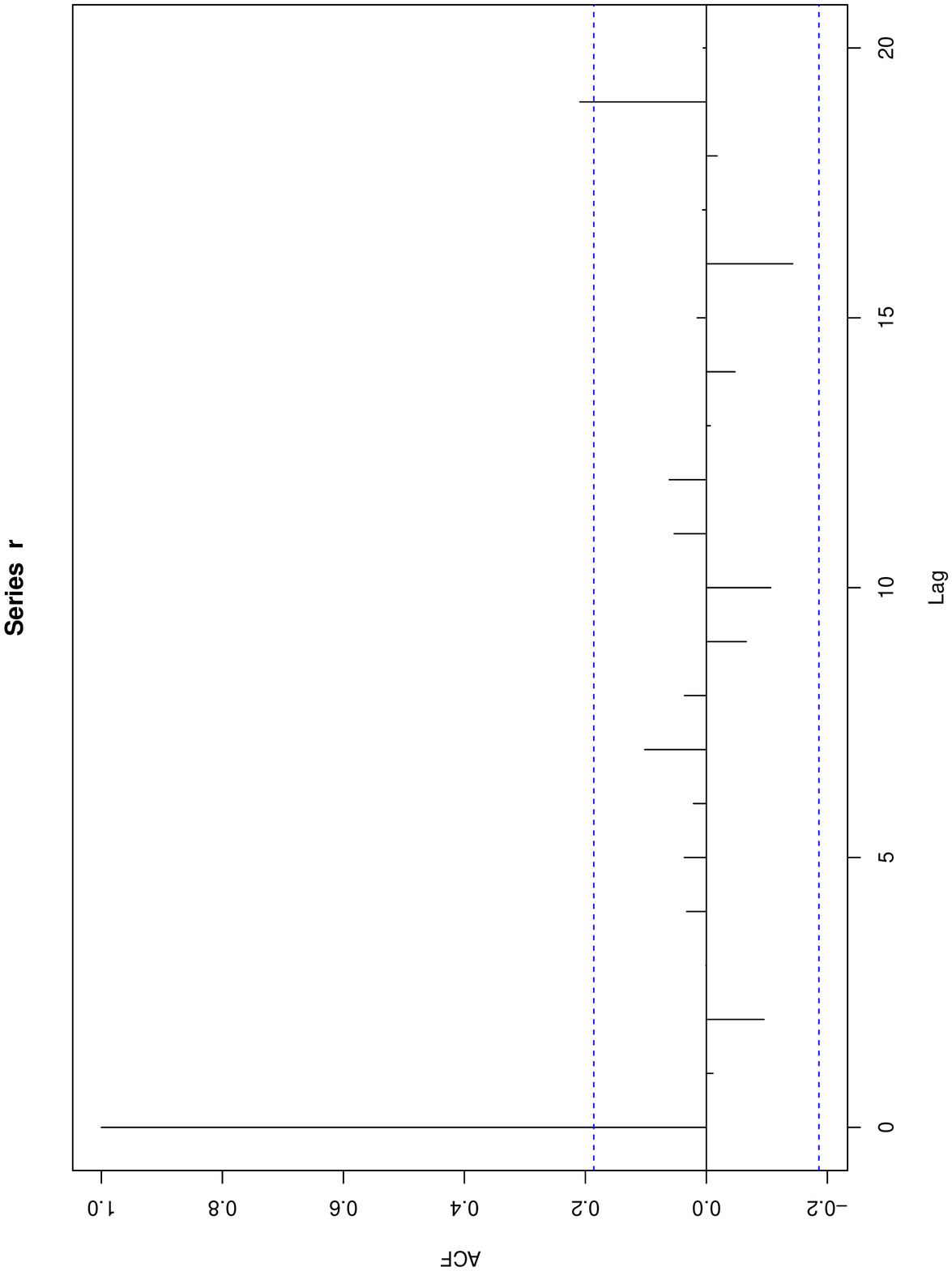}
\end{center}
\caption{ACF of the Pearson residuals of the earthquake data fitted by the self-excited threshold Poisson autoregression model.}
\label{fig:acf_residual}
\end{figure}

\section{Discussion}
\label{sec:concl}

There are some open problems deserving further investigation. The asymptotic properties of the maximum likelihood estimator derived in Theorem~\ref{asymptotics_theta} might be extended to the case without the constraint that $b_1<1$. Another question is to test the self-excited threshold Poisson autoreregression model against the original Poisson autoregression model. Lastly, beyond the self-excited threshold Poisson model discussed in this paper, the following extension with multiple thresholds can be considered. For given integers $0=r_0<r_1\cdots<r_{n-1}<r_n=\infty$, it is assumed that $\mathcal{L}(Y_t \mid \mathcal{F}_{t-1})  = \textrm{Poisson}(\lambda_t)$ , where
\begin{align*}
  \lambda_t=
  \sum_{i=1}^n (d_{i}+a_{i}\lambda_{t-1}+b_{i}Y_{t-1})1\{Y_{t-1}\in [r_{i-1},r_{i})\},
\end{align*}
and $d_{i}> 0, a_{i}> 0,b_{i}> 0\;(i=1,\dots,n)$.

Results similar to Theorem~\ref{thm:stable},Theorem~\ref{consistency_para}, and Theorem~\ref{asymptotics_theta} can be established in a similar manner.

\section{Appendix}

In the following proofs, without explicit specification, $C$ denotes a generic positive constant, and $\rho$ a generic constant such that $\rho\in(0,1)$. $\|X\|_p$ denotes the $L_p$-norm of a random variable $X$. The transition probability kernel of $\left\{ \lambda_t \right\}$ is denoted by $\mathbf{P}$. For any function $V:\mathbb{R}\to\mathbb{R}$, let $\mathbf{P}V(\lambda)=\E(V(\lambda_1)|\lambda_0=\lambda)$.

\subsection{Proof of Theorem~\ref{thm:stable}}

\begin{proof}
We first prove some lemmas.

\begin{lemma}
  For a Poisson process $\{N(u), ~u\ge0\}$ with unit rate, 
  \begin{enumerate}
  \item $\displaystyle \lim_{u\to\infty} N(u)/u =1$ almost surely.
  \item The family of random variables $\{ (\frac{N(u)}{u})^s,~u\ge 1  \}$ 
    is uniformly integrable for any integer $s\geq 1$.
  \end{enumerate}
\label{lem1}
\end{lemma}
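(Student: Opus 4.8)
The plan is to treat the two parts separately: the strong law in part (1) will follow from the i.i.d.\ structure of the unit Poisson increments together with a monotone interpolation argument, while the uniform integrability in part (2) will follow from a bound on the normalized moments that is uniform in $u$.

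For part (1), I would first restrict to integer arguments. Writing $N(n)=\sum_{k=1}^n\big(N(k)-N(k-1)\big)$, the increments $N(k)-N(k-1)$ are i.i.d.\ $\mathrm{Poisson}(1)$ with mean $1$, so the classical Kolmogorov strong law gives $N(n)/n\to 1$ almost surely. To pass to continuous $u$, I would exploit the monotonicity of $N$: for $n\le u<n+1$,
\[
\frac{N(n)}{n+1}\le \frac{N(u)}{u}\le \frac{N(n+1)}{n}.
\]
As $u\to\infty$ we have $n\to\infty$, and both bounds converge to $1$ almost surely, since the left side equals $(N(n)/n)\cdot\frac{n}{n+1}$ and the right side equals $(N(n+1)/(n+1))\cdot\frac{n+1}{n}$, each a convergent normalized count times a ratio tending to $1$. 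The squeeze then yields $N(u)/u\to1$ a.s.

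For part (2), I would invoke the standard sufficient condition that a family bounded in $L^p$ for some $p>1$ is uniformly integrable. Since $N(u)\sim\mathrm{Poisson}(u)$, its $m$-th moment is a polynomial in $u$ with terms $S(m,k)\,u^k$ for $0\le k\le m$, where $S(m,k)$ are the Stirling numbers of the second kind, so that the leading coefficient is $1$. Dividing by $u^m$,
\[
\E\Big[\big(N(u)/u\big)^m\Big]=\sum_{k=0}^m S(m,k)\,u^{k-m}\le \sum_{k=0}^m S(m,k)=:B_m
\]
for all $u\ge1$, because each exponent $k-m\le0$ forces $u^{k-m}\le1$. Hence the $m$-th normalized moment is bounded uniformly in $u\ge1$ by the Bell number $B_m$. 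Taking $m=s+1$, the family $\{(N(u)/u)^s,\ u\ge1\}$ is bounded in $L^{(s+1)/s}$ with exponent $(s+1)/s>1$, which gives uniform integrability.

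The routine ingredients here are the strong law for the i.i.d.\ increments and the monotone interpolation; the only point that needs care is the moment estimate in part (2), namely observing that normalizing by $u^m$ and restricting to $u\ge1$ turns the Poisson moment polynomial into a quantity bounded uniformly in $u$. I do not expect a genuine obstacle, as both facts are classical; the main thing to get right is the choice of exponent ($m=s+1$, yielding an $L^p$ bound with $p>1$) so that the estimate delivers uniform integrability rather than mere $L^1$ boundedness.
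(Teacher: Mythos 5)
Your proof is correct and follows essentially the same route as the paper: the strong law for the i.i.d.\ unit Poisson increments plus the monotone sandwich $N(\lfloor u\rfloor)\le N(u)\le N(\lfloor u\rfloor+1)$ for part (1), and for part (2) the observation that the $q$-th moment of $\mathrm{Poisson}(u)$ is a degree-$q$ polynomial in $u$, so $\E[(N(u)/u)^q]$ is bounded for $u\ge1$, applied with some $q>s$ to get an $L^p$ bound with $p>1$. Your version merely makes the paper's generic constant explicit via Stirling and Bell numbers.
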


\begin{proof}
The first assertion is clearly correct for integer-valued $u$'s
following the law of large numbers. For arbitrary $u$, let  $\lfloor u \rfloor$ be the integer part of $u$, then
$  \lfloor u\rfloor \le u <  \lfloor u\rfloor+1$, 
and 
$ N(\lfloor u\rfloor)\le N(u)\le N(\lfloor u\rfloor+1)$. 
The conclusion follows.

For the second assertion, since $N(u)$ has a Poisson distribution with mean $u$, its $q$-th order moment is a polynomial function of $u$ of degree $q$. Therefore there exists a constant $C$ such that
\[  \E\left( \frac{N(u)}{u} \right)^q \le C, \; u\ge 1.
\]
For given order $s\ge 1$, using the bound with $q>s$ the uniformly integrability of the family $\left\{ [N(u)/u]^s, \; u \ge 1 \right\}$ is proved.
\end{proof}

\begin{lemma}\label{lem:Lyap}
For $s\ge 1$,  let $V(\la)=\la^s$.  Then 
\[       \lim_{\la\to\infty} \frac{\mathbf{P} V(\la)}{V(\la)} = (a_2+b_2)^s.
  \]
\end{lemma}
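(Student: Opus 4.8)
The plan is to rewrite the ratio $\mathbf{P}V(\la)/V(\la)$ as a single conditional expectation and to identify its limit by combining an almost-sure convergence with a uniform integrability bound, both of which are exactly the two assertions of Lemma~\ref{lem1}. Conditioning on $\lambda_0=\la$ makes $Y_0=N_0(\la)$ a Poisson variable with mean $\la$, and since $V(\la)=\la^s$,
\[
\frac{\mathbf{P}V(\la)}{V(\la)}=\E\left[\left(\frac{\lambda_1}{\la}\right)^{\!s}\ \Big|\ \lambda_0=\la\right],\qquad
\frac{\lambda_1}{\la}=\frac{d_1+a_1\la+b_1Y_0}{\la}\,1\{Y_0\le r\}+\frac{d_2+a_2\la+b_2Y_0}{\la}\,1\{Y_0>r\}.
\]

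First I would establish the pointwise limit. By Lemma~\ref{lem1}(1), $Y_0/\la=N_0(\la)/\la\to1$ almost surely as $\la\to\infty$; in particular $Y_0\to\infty$, so $1\{Y_0>r\}\to1$ and the lower-regime term is eventually absent. Hence
\[
\frac{\lambda_1}{\la}=\frac{d_2}{\la}+a_2+b_2\frac{Y_0}{\la}+o(1)\longrightarrow a_2+b_2\quad\text{a.s.},
\]
and therefore $(\lambda_1/\la)^s\to(a_2+b_2)^s$ almost surely.

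Next I would secure the uniform integrability needed to transfer this limit to the expectation. Regardless of the active regime,
\[
\lambda_1\le (d_1+d_2)+(a_1+a_2)\la+(b_1+b_2)Y_0,
\]
so for $\la\ge1$ there is a constant $C$ with $\lambda_1/\la\le C\,(1+Y_0/\la)$, and by convexity $(\lambda_1/\la)^s\le C'\,(1+(N_0(\la)/\la)^s)$ for some $C'$. Lemma~\ref{lem1}(2) states that $\{(N_0(\la)/\la)^s:\la\ge1\}$ is uniformly integrable, and a family dominated by an affine function of a uniformly integrable family is again uniformly integrable; thus $\{(\lambda_1/\la)^s:\la\ge1\}$ is uniformly integrable. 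Taking any sequence $\la_n\to\infty$, the almost-sure convergence together with uniform integrability yields $L^1$-convergence by Vitali's theorem, so $\E[(\lambda_1/\la_n)^s\mid\lambda_0=\la_n]\to(a_2+b_2)^s$; as the sequence is arbitrary, the stated limit over the continuum follows.

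The step I expect to be the main obstacle is controlling the lower-regime contribution. Although the event $\{Y_0\le r\}$ has probability tending to $0$ as $\la\to\infty$, the corresponding summand in the expectation is not obviously negligible a priori, and a crude bound could leave a spurious contribution involving $a_1,b_1$. The uniform integrability furnished by Lemma~\ref{lem1}(2) is precisely what rules this out, upgrading the almost-sure limit to convergence of the expectation and pinning the constant down to $(a_2+b_2)^s$.
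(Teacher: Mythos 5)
Your proof is correct and follows essentially the same route as the paper's: both identify the ratio as an expectation of a two-regime random variable, obtain the almost-sure limit $(a_2+b_2)^s$ from Lemma~\ref{lem1}(1) (the lower-regime indicator vanishing), and upgrade it to convergence of expectations via the uniform integrability bound $C'(1+(N_0(\la)/\la)^s)$ furnished by Lemma~\ref{lem1}(2). The only cosmetic differences are that you phrase the final step through Vitali's theorem along arbitrary sequences, while the paper simply invokes uniform integrability directly.
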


\begin{proof}
  
We have 
\begin{eqnarray*}
\frac{\mathbf{P}V(\lambda)}{V(\lambda)} &=& \frac{\E\left[V(\lambda_1)\mid \lambda_0=\lambda\right]}{V(\lambda)}\\
&=& \E\left[  
\left( \frac{d_1}{\lambda} + a_1+b_1\frac{Y_0}{\lambda} \right)^s1_{\{Y_0\le r\}} 
+ \left( \frac{d_2}{\lambda} + a_2+b_2\frac{Y_0}{\lambda} \right)^s 1_{\{Y_0>r\}} 
\right] \\
& := & \E[ h(\la,\omega)]~.
\end{eqnarray*}

For fixed $\omega$ and when $\la\to\infty$,  since by Lemma~\ref{lem1},
$Y_0/\lambda=N_0( \lambda)/\lambda \to  1$
a.s.,  $ 1_{\{Y_0\le r \}}  \to 0 $.  Therefore $h(\lambda,\omega)\to (a_2+b_2)^s$ a.s. as $\lambda\to\infty$.

Next we check the  uniform integrability condition.  Using $(a+b)^s\le 2^{s-1}(a^s+b^s)$
for $s\ge1,~a,b\ge0$, it is clear that for all $\lambda\in[1,\infty)$,
\[     0\le h(\lambda,\omega)\le c(s) \left( 1+ \left(\frac{Y_0}{\lambda}\right)^s   \right),
\]
for some constant $c(s)$ independent of $\lambda$ (but 
depending on $s$ and the parameters).
By Lemma~\ref{lem1}, 
the family $ \{(Y_0/\lambda)^s,~\lambda\ge 1\}  $ is uniformly integrable, so is
the family $\{h(\lambda,\omega),~\lambda\ge 1 \}$.   We thus obtain the announced limit.
\end{proof}

\begin{lemma}\label{lem:Feller}
    The Markov chain $\left\{ \lambda_t \right\}$ is weakly Feller.
\end{lemma}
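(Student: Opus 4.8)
The plan is to verify the weak Feller property directly from its definition: for every bounded continuous function $f:\mathbb{R}\to\mathbb{R}$, I must show that the map $\lambda\mapsto\mathbf{P}f(\lambda)=\E[f(\lambda_1)\mid\lambda_0=\lambda]$ is again bounded and continuous. The first step is to write this expectation explicitly. Given $\lambda_0=\lambda$, the innovation $Y_0=N_0(\lambda)$ is Poisson$(\lambda)$, and conditionally on $\{Y_0=k\}$ the next intensity takes the deterministic value
\[
g_k(\lambda)=\begin{cases} d_1+a_1\lambda+b_1 k,& k\le r,\\ d_2+a_2\lambda+b_2 k,& k> r.\end{cases}
\]
Summing over the Poisson law of $Y_0$ therefore gives the closed form
\[
\mathbf{P}f(\lambda)=\sum_{k=0}^\infty f\bigl(g_k(\lambda)\bigr)\,\frac{e^{-\lambda}\lambda^k}{k!}.
\]

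Boundedness is immediate: if $|f|\le M$ then $|\mathbf{P}f(\lambda)|\le M\sum_{k\ge0}e^{-\lambda}\lambda^k/k!=M$. For continuity I would argue term by term. Each summand $u_k(\lambda)=f(g_k(\lambda))\,e^{-\lambda}\lambda^k/k!$ is continuous in $\lambda$, since $g_k$ is affine, $f$ is continuous, and the Poisson weight is smooth. To transfer continuity from the individual terms to the infinite sum, the key step is to establish local uniform convergence: on any compact interval $[0,L]$ one has the $\lambda$-free bound $|u_k(\lambda)|\le M\,L^k/k!$ with $\sum_{k\ge0}M L^k/k!=Me^L<\infty$, so by the Weierstrass M-test the series converges uniformly on $[0,L]$. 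A uniform limit of continuous functions is continuous, and as $L$ is arbitrary, $\mathbf{P}f$ is continuous on the whole state space, which is the desired conclusion.

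I do not anticipate a serious technical obstacle; the only point deserving emphasis is conceptual. The threshold makes the joint map $(\lambda_0,Y_0)\mapsto\lambda_1$ discontinuous in the discrete variable $Y_0$, and this is precisely why, as noted in the text, the chain fails to be \emph{strong} Feller: the law $\mathbf{P}(\lambda,\cdot)$ is carried by the moving atoms $\{g_k(\lambda)\}_{k\ge0}$, so testing against bounded \emph{measurable} functions cannot yield continuity. The property that rescues the \emph{weak} Feller condition is that we integrate only against continuous $f$: as $\lambda$ varies, both the atom locations $g_k(\lambda)$ and the weights $e^{-\lambda}\lambda^k/k!$ move continuously, and integrating a continuous $f$ against them smooths over the atomic structure. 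The sole care required is the interchange of limit and infinite summation, which the M-test on compacta handles cleanly.
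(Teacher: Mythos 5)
Your proof is correct, but it takes a different route from the paper's. You expand $\mathbf{P}f(\lambda)$ as the explicit series $\sum_{k\ge 0} f(g_k(\lambda))\,e^{-\lambda}\lambda^k/k!$ over the Poisson atoms and obtain continuity from term-by-term continuity plus the Weierstrass M-test on compacta (the bound $|u_k(\lambda)|\le M L^k/k!$ on $[0,L]$ is exactly what is needed, and the threshold causes no trouble since $k$ is discrete and each $g_k$ is affine in $\lambda$). The paper instead works with the Poisson-process embedding $Y_0=N_0(\lambda_0)$: it conditions on the event $A$ that $N_0$ has no jump in a small interval $(x_0-\eta,x_0+\eta]$, so that on $A$ the map $x\mapsto F(x,N_0)$ is continuous and dominated convergence applies, while the contribution of $A^c$ is controlled by $2\|g\|_\infty(1-e^{-2\eta})$. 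Your argument is more elementary and self-contained — it needs only the conditional Poisson law and uniform convergence, with no coupling of the counts at nearby intensities — whereas the paper's event-splitting device is consistent with its systematic use of the representation $Y_t=N_t(\lambda_t)$ elsewhere and adapts more readily to settings where the transition law does not admit such a clean closed-form series. Your closing remark correctly identifies why the chain is weak but not strong Feller, which matches the discussion in the text.
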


\begin{proof}
  To make the dependence on Poisson processes explicit, we write the state equation Eq~\eqref{lambda} in the form $\lambda_t=F(\lambda_{t-1},N_{t-1})$ with $Y_{t-1}$ replaced by $N_{t-1}\left( \lambda_{t-1} \right)$, using the representation of $Y_{t-1} = N_{t-1}(\lambda_{t-1})$ in Eq~\eqref{y}.
Let $g:\R_+\to\R$ be any continuous and bounded function. We need to prove that 
$\mathbf{P}g(x)=\E[g(\lambda_1) \mid \lambda_0=x]$ is continuous. Let $\varepsilon>0$ and first
choose $\eta>0$ such that $2\|g\|_\infty(1-e^{-2\eta})\le \varepsilon/2$.
Consider a  neighbourhood $(x_0-\eta, x_0+\eta]$
of some  $x_0\in\R_+$.  Define the event 
\[    A = \left\{  
\textrm{\normalfont ~the Poisson process
$N_0$ has no jumps in  $(x_0-\eta,x_0+\eta]$ } 
\right\}. 
\]
Clearly, $P(A)=e^{-2\eta}$.  Write 
\begin{align*}
\mathbf{P}g(x)-\mathbf{P}g(x_0)
=& \E\left[ g(F(x,N_0))- g(F(x_0,N_0)) \right] \\  
=&  \E\left[ \left\{g(F(x,N_0)) -g(F(x_0,N_0))  \right\}1_A  \right] \\
& + \E\left[ \left\{ g(F(x,N_0)) -g(F(x_0,N_0))  \right\}  1_{A^c}        \right].
\end{align*}
On $A^c$, we have 
\[ 
\left|\E\left\{ g(F(x,N_0)) - g(F(x_0,N_0)) \right\}1_{A^c} \right| \le 
2\|g\|_\infty P(A^c) = 2\|g\|_\infty (1-e^{-2\eta}) \le \varepsilon/2.
\]
And on the event $A$, $N_0(x)=N_0(x_0)$, for any $x\in(x_0-\eta,x_0+\eta]$. The mapping $x\mapsto F(x,N_0)$ is continuous, so is $ x\mapsto g(F(x,N_0))1_A$ which is also  bounded.
Thus by Lebesgue's dominated convergence theorem, 
\[    \E\left[\left\{  g(F(x,N_0)) -  g(F(x_0,N_0)) \right\} 1_A \right] \to  0,~x\to x_0.
\]
We can then choose $\eta_1<\eta$ such that for $|x-x_0|<\eta_1$, 
\[ 
\left| \E\left\{  g(F(x,N_0)) -  g(F(x_0,N_0)) \right\} 1_A 
\right| \le  \varepsilon/2~.
\]
Finally  for $|x-x_0|<\eta_1$, by collecting these two estimates,  
\[   | \mathbf{P}g(x)-\mathbf{P}g(x_0)|\le\varepsilon.
\]
The proof is complete. 
\end{proof}

 \begin{lemma}\label{lem:e-chain}
    The Markov chain $\left\{ \lambda_t \right\}$ is an e-chain provided that $a_1<1$ and $a_2+b_2<1$.
\end{lemma}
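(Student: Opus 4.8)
The plan is to verify the defining property of an e-chain from \citet{MeynTweedie1993} directly: for every bounded uniformly continuous $g:\R_+\to\R$ the family $\{\mathbf{P}^n g:n\ge0\}$ is equicontinuous on compact sets. I would establish this through a shared-noise coupling. Fix $x_0\in\R_+$ and a nearby point $x$, and run two copies of the intensity chain, $\{\lambda_t(x)\}$ and $\{\lambda_t(x_0)\}$, driven by the \emph{same} sequence of unit-rate Poisson processes $\{N_t\}$, so that $Y_{t-1}(x)=N_{t-1}(\lambda_{t-1}(x))$ and $Y_{t-1}(x_0)=N_{t-1}(\lambda_{t-1}(x_0))$. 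Set $\Delta_t=|\lambda_t(x)-\lambda_t(x_0)|$ and $\bar{a}=\max(a_1,a_2)$, which lies in $(0,1)$ since $a_1<1$ and $a_2<a_2+b_2<1$.

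The decisive observation is that the two chains stay perfectly synchronized until a Poisson arrival first falls strictly between their intensities. I would define the first separation time $\tau=\inf\{t\ge1: N_{t-1}\text{ has an arrival in the interval with endpoints }\lambda_{t-1}(x),\lambda_{t-1}(x_0)\}$. For $t<\tau$ there is no such arrival, hence $Y_{t-1}(x)=Y_{t-1}(x_0)$; in particular both chains lie in the same regime, share the common coefficients $(d_{t-1},a_{t-1},b_{t-1})$, and the intercept and $b_{t-1}Y_{t-1}$ terms cancel in the difference, leaving $\lambda_t(x)-\lambda_t(x_0)=a_{t-1}(\lambda_{t-1}(x)-\lambda_{t-1}(x_0))$ with a common factor $a_{t-1}\in\{a_1,a_2\}$. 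Thus before separation the difference decays geometrically, $\Delta_{t-1}\le \bar{a}^{\,t-1}|x-x_0|$.

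Next I would bound the probability that the chains ever separate. Since $N_{t-1}$ is independent of $\mathcal F_{t-1}$ and the number of its arrivals in an interval of length $\Delta_{t-1}$ is $\mathrm{Poisson}(\Delta_{t-1})$, the one-step conditional separation probability on $\{\tau>t-1\}$ is $1-e^{-\Delta_{t-1}}\le \Delta_{t-1}\le \bar{a}^{\,t-1}|x-x_0|$. Summing the geometric series gives
\[
  P(\tau<\infty)\le \sum_{t\ge1}\bar{a}^{\,t-1}|x-x_0|=\frac{|x-x_0|}{1-\bar{a}},
\]
which tends to $0$ as $x\to x_0$. To finish, I split $\mathbf{P}^n g(x)-\mathbf{P}^n g(x_0)=\E[g(\lambda_n(x))-g(\lambda_n(x_0))]$ over $\{\tau>n\}$ and $\{\tau\le n\}$. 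On $\{\tau>n\}$ we have $\Delta_n\le \bar{a}^{\,n}|x-x_0|\le|x-x_0|$, so the uniform continuity of $g$ bounds the integrand by its modulus $\omega_g(|x-x_0|)$; on $\{\tau\le n\}$ I use only $|g(\lambda_n(x))-g(\lambda_n(x_0))|\le 2\|g\|_\infty$ together with the bound on $P(\tau\le n)$. This yields
\[
  |\mathbf{P}^n g(x)-\mathbf{P}^n g(x_0)|\le \omega_g(|x-x_0|)+\frac{2\|g\|_\infty}{1-\bar{a}}\,|x-x_0|,
\]
a bound independent of $n$ that vanishes as $x\to x_0$, giving equicontinuity and hence the e-chain property.

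The step I expect to require the most care — and the reason the weak hypothesis $a_1<1$ (rather than $a_1+b_1<1$) suffices even with an explosive lower regime — is that the argument never tracks the difference after separation: the post-separation discrepancy, which an explosive lower regime could indeed inflate, is simply dominated by $2\|g\|_\infty$ and paid for by the smallness of $P(\tau<\infty)$. The delicate points are therefore (i) the verification that before $\tau$ the two chains occupy the same regime and so contract by a common factor strictly below one, and (ii) the independence of $N_{t-1}$ from $\mathcal F_{t-1}$ that legitimizes the $\mathrm{Poisson}(\Delta_{t-1})$ estimate for the per-step separation probability. Everything else is routine.
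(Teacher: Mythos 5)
Your argument is correct, but it reaches the e-chain property by a genuinely different route from the paper's. The paper's proof is an analytic induction on the number of steps $k$: it splits $\mathbf{P}^k f(x)-\mathbf{P}^k f(z)$ over the two regimes, invokes the bound $\sum_{i}|p(i\mid x)-p(i\mid z)|\le 2(1-e^{-|x-z|})$ on the $L_1$-distance between the two Poisson mass functions at each level, and accumulates the resulting errors into $\epsilon'+4\sum_{s=0}^{k-1}(1-e^{-\bar{a}^{s}|x-z|})\le\epsilon'+4|x-z|/(1-\bar{a})$. You replace the induction by a single shared-noise coupling with a stopping time: until the first Poisson arrival lands between the two intensities the observations coincide, both chains sit in the same regime, the $d$- and $b$-terms cancel, and the gap contracts by $a_{t-1}\le\bar{a}$; the probability of ever decoupling is then controlled by the same geometric series, giving $P(\tau<\infty)\le|x-x_0|/(1-\bar{a})$. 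Quantitatively the two proofs are essentially equivalent --- your $P(\tau<\infty)$ is exactly the paper's accumulated sum of one-step total-variation errors, and both rest on the same two facts (contraction of the $a$-coefficients while the chains agree, and the $1-e^{-\Delta}$ estimate for the Poisson discrepancy) --- but your version is shorter, dispenses with the induction, and makes transparent why only $a_1<1$ and $a_2<1$ (never the $b_i$) enter: after decoupling you pay only $2\|g\|_\infty$. The two points you flag as delicate are indeed the ones that need care, and both are sound: the regime agreement before $\tau$ follows because the regime is a function of the shared $Y_{t-1}$, and the $\mathrm{Poisson}(\Delta_{t-1})$ estimate is legitimate because $\{\tau>t-1\}$ and $\Delta_{t-1}$ are measurable with respect to $\sigma(N_0,\dots,N_{t-2})$, of which $N_{t-1}$ is independent. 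Note also that your class of test functions (bounded uniformly continuous) contains the continuous compactly supported functions required by the Meyn--Tweedie definition, so you prove slightly more than needed; and your construction is in the spirit of the coupling argument of Douc, Doukhan and Moulines mentioned in the introduction, here deployed to verify equicontinuity of $\{\mathbf{P}^n g\}$ rather than to establish uniqueness of the stationary law directly.
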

\begin{proof}
  
It suffices to show that for any continuous function $f$ with compact support and $\epsilon>0$, there exists an $\eta>0$ such that $|\mathbf{P}^kf(x) - \mathbf{P}^kf(z)|<\epsilon$, for any $|x-z|<\eta$ and all $k\ge 1$, where $\mathbf{P}^kf(\cdot)=\E(f(\lambda_k)\mid \lambda_0=\cdot)$. 

Without loss of generality, assume $|f|\le 1$. Take $\epsilon'$ and $\eta$ sufficiently small such that $\epsilon'+4\eta / (1-\bar{a})<\epsilon$, where $\bar{a}=\max\{a_1,a_2\}<1$, and $|f(x_1) - f(z_1)| < \epsilon'$ whenever $|x_1 - z_1|<\eta$. Denote $p(\cdot\mid x)$ as the probability mass function of a Poisson distribution with intensity $x$. Then for the case when $k=1$,
\begin{eqnarray*}
  && |\mathbf{P}f(x_1)-\mathbf{P}f(z_1)|\\
  &\le& |\displaystyle\sum_{i=0}^r f(d_1 + a_1 x_1+b_1 i)p(i\mid x_1) - \sum_{i=0}^r f(d_1 + a_1 z_1 +b_1 i)p(i\mid z_1)|\\
                          &&+ |\sum_{j=r+1}^{\infty}f(d_2 +a_2 x_1 + b_2 j)p(j\mid x_1) - \sum_{j=r+1}^{\infty} f(d_2 +a_2 z_1 + b_2 j)p(j\mid z_1)|\\
                         &:=& \Rmnum{1} + \Rmnum{2}.
\end{eqnarray*}
For $x_1\ge z_1$, 
\begin{eqnarray*}
\displaystyle\sum_{i=0}^{\infty}|p(i\mid x_1)-p(i\mid z_1)|&=&\sumi |\frac{x_1^i e^{-x_1}}{i!}-\frac{z_1^i e^{-z_1}}{i!}|\\
                           &\le&\sumi \frac{(x_1^i- z_1^i) e^{-x_1}}{i!} + \sumi \frac{z_1^i (e^{-z_1} - e^{-x_1})}{i!}\\
                           &=& 2(1-e^{-|x_1-z_1|}).
\end{eqnarray*}

The same inequality holds for $x_1<z_1$ by symmetry.
Hence for any $x_1$ and $z_1$, we have
\begin{eqnarray}
\label{eq:DiffPoisPdf}
\sumi |p(i\mid x_1) - p(i\mid z_1)| \le 2(1 - e^{-|x_1 - z_1|}).
\end{eqnarray}
It follows that
\begin{eqnarray*}
\Rmnum{1}&\le&\displaystyle\sum_{i=0}^r |f(d_1+a_1 x_1 + b_1 i) - f(d_1 + a_1 z_1 + b_1 i)|p(i \mid x_1)\\
                &&+\sum_{i=0}^r |f(d_1 + a_1 z_1 + b_1 i)||p(i \mid x_1) - p(i \mid z_1)|\\
               &\le& \epsilon' F(r\mid x_1) + 2(1 - e^{-|x_1-z_1|}),
\end{eqnarray*}
where $F(r \mid x_1)=\sum_{i=0}^r p(i \mid x_1)$. The last inequality follows from Eq~\eqref{eq:DiffPoisPdf}, $|f|\le 1$, and the fact that $|(d_1+a_1 x_1 + b_1 i) - (d_1+a_1 z_1 + b_1 i)|=a_1|x_1-z_1|<\eta$. It follows from a similar argument that $\Rmnum{2}\le \epsilon' (1-F(r \mid x_1)) + 2(1-e^{-|x_1-z_1|})$. Hence we have 
\begin{eqnarray}
|\mathbf{P}f(x_1) - \mathbf{P}f(z_1)|\le \epsilon' + 4(1 - e^{-|x_1-z_1|}),
\label{eq:echaink1}
\end{eqnarray}
for $|x_1-z_1|<\eta$. For the case when $k=2$, it follows from 
\begin{eqnarray*}
\E\{f(\lambda_2) \mid \lambda_0=x\}=\E\{\E[f(\lambda_2) \mid \lambda_1]\bigr|\lambda_0=x\}
\end{eqnarray*}
that
\begin{eqnarray*}
  |\mathbf{P}^2 f(x_1) - \mathbf{P}^2 f(z_1)| &=& |\mathbf{P}(\mathbf{P}f)(x_1) - \mathbf{P}(\mathbf{P}f)(z_1)|\\
  &\le& |\displaystyle\sum_{i=0}^r p(i \mid x_1)\mathbf{P}f(x_2^{(1)}) - \sum_{i=0}^r p(i \mid z_1) \mathbf{P}f(z_2^{(1)})|\\
  &&+ |\displaystyle\sum_{j=r+1}^{\infty} p(j \mid x_1)\mathbf{P}f(x_2^{(2)}) - \sum_{j=r+1}^{\infty} p(j \mid z_1) \mathbf{P}f(z_2^{(2)})|\\
                                       &:=& \Rmnum{3} + \Rmnum{4},
\end{eqnarray*}
where $x_2^{(1)}=d_1 + a_1 x_1 + b_1 i, x_2^{(2)}=d_2 + a_2 x_1 + b_2 j, z_2^{(1)}=d_1 + a_1 z_1 + b_1 i$, and $z_2^{(2)}=d_2 + a_2 z_1 + b_2 j$. Then 
\begin{eqnarray*}
  \Rmnum{3}&\le&\displaystyle\sum_{i=0}^r p(i \mid x_1)|\mathbf{P}f(x_2^{(1)}) - \mathbf{P}f(z_2^{(1)})| + \sum_{i=0}^r |\mathbf{P}f(z_2^{(1)})||p(i \mid x_1) - p(i \mid z_1)|\\
  &\le&\left\{\epsilon' + 4\left(1 - e^{-|x_2^{(1)} - z_2^{(1)}|}\right)\right\} F(r \mid x_1) + 2\left(1 - e^{-|x_1 - z_1|}\right),
\end{eqnarray*}
which follows from (\ref{eq:DiffPoisPdf}) and (\ref{eq:echaink1}). Similarly, we have 
$$\Rmnum{4}\le \left\{\epsilon' + 4(1 - e^{-|x_2^{(2)} - z_2^{(2)}|})\right\}(1 - F(r \mid x_1)) + 2\left(1 - e^{-|x_1 - z_1|}\right).$$

Since $|x_2^{(1)} - z_2^{(1)}|=a_1 |x_1-z_1|$ and $|x_2^{(2)} - z_2^{(2)}|=a_2 |x_1-z_1|$, so by letting $\bar{a}=\max\{a_1, a_2\}$, we have
\begin{eqnarray*}
  |\mathbf{P}^2f(x_1) - \mathbf{P}^2f(z_1)|&\le& \epsilon' + 4\left(1 - e^{-\bar{a}|x_1 - z_1|}\right) + 4\left(1 - e^{-|x_1 - z_1|}\right).
\end{eqnarray*}
Inductively, one can show that for any $k\ge 1$,
\begin{eqnarray*}
  |\mathbf{P}^kf(x_1) - \mathbf{P}^kf(z_1)| &\le& \epsilon' + 4 \sum_{s=0}^{k-1} \left(1 - e^{-\bar{a}^s |x_1 - z_1|}\right)\\
                                        &\le& \epsilon' + 4\sum_{s=0}^{\infty}\bar{a}^s|x_1 - z_1|\\
                                       &\le& \epsilon' + \frac{4\eta}{1-\bar{a}}<\epsilon,
\end{eqnarray*}
where the second inequality holds since $1-e^{-x}\le x$. Hence $\{\lambda_t\}$ is an e-chain.

\end{proof}

\noindent{\bf Proof of Theorem~\ref{thm:stable}\quad}
By Lemma~\ref{lem:Lyap}, for any initial value  $\la_0=x$, 
the sequence of transition probabilities 
\[ \overline\pi_n(x,dy) = \frac1n\left\{ \mathbf{P}(x,dy)+\cdots+\mathbf{P}^n(x,dy)  \right\}
\]
is tight \citep[Proposition 2.1.6]{Duflo1997}. Moreover, using the weak Feller property established in Lemma~\ref{lem:Feller},
we know that the weak limit of any subsequence of $\{
\overline\pi_n(x,dy)\}$ is 
an invariant probability measure of $\mathbf{P}$. 

Then note that $\lambda^{\ast}=d_1 / (1-a_1)$ is a reachable state by letting $Y_1=Y_2=\ldots=Y_t=0$ for large $t$. Combined with the fact that $\left\{ \lambda_t \right\}$ is an e-chain, it follows that the stationary distribution is unique.

The fact that $\mu(|x|^s)<\infty$ for all $s\ge0$ directly results from  the Lyapounov property established in Lemma~\ref{lem:Lyap}. The strong law of large numbers also follows from this method, see Proposition 6.2.12 and the remarks in Section 6.2.2 in \citet{Duflo1997}. The proof is complete.
\end{proof}

\subsection{Proof of Corollary~\ref{prop:stable_y}}
\begin{proof}
  The stability of the joint process is clear. To see $Y_t\in L_s$, for all $s>0$, it suffices to note that $\lambda_t\in L_s$ for all $s>0$ and the following fact
 \[
 \E (Y_t)^s=\E[ \E\{ (Y_t)^s \mid \lambda_t\}]=(\E(Poly(\lambda_t,s))<\infty,
 \]
 where $Poly(\lambda_t,s)$ is the polynomial of $\lambda_t$ of order $s$ which represents the $s$th moment of a Poisson random variable with mean $\lambda_t$.

\end{proof}

\subsection{Proof
of Theorem~\ref{consistency_para}}
\begin{proof}
  Since the log-likelihood $\tilde{\ell}$ is calculated with a given initial value $\tilde{\lambda}_1$, we first show that
  the log-likelihood $\tilde{\ell}$ is asymptotically independent of $\tilde{\lambda}_1$.

  Using the varying-coefficient representation in Eq~\eqref{lambda_vc}, we have
  \begin{align*}
    \lambda_t(\lambda_1)&=\sum_{k=1}^{t-2} \prod_{j=1}^{k-1}a_{t-j} c_{t-k} +\prod _{j=1}^{t-1}a_{t-j}\lambda_1,
  \end{align*}
  which implies
  \begin{align*}
    \sup_{\theta\in\mathcal{D}}|\lambda_t(\lambda_1)-\tilde{\lambda}_t(\tilde{\lambda_1})|
    =\sup_{\theta\in\mathcal{D}}|\prod_{j=1}^{t-1}a_{t-j}(\lambda_1-\tilde{\lambda}_1)|
    \leq  K \rho^t,
  \end{align*}
  where $\rho=\sup_{\theta\in\mathcal{D}}\max\left\{ a_{1},a_{2} \right\}<1$ and $K=|\lambda_1-\tilde{\lambda}_1|/\rho$.

  Then the difference between the log-likelihoods based on arbitrary initial value and on the stationary initial one is
  \begin{align*}
    \sup_{\theta\in\mathcal{D}}|\frac{1}{n}(\ell(\lambda_1)-\ell(\tilde{\lambda}_1)|
    =&\sup_{\theta\in\mathcal{D}}|\frac{1}{n}\sum_{t=1}^n Y_t(\log(\lambda_t)-\log(\tilde{\lambda}_t))-(\lambda_t-\tilde{\lambda}_t)|\\
    =&\sup_{\theta\in\mathcal{D}}|\frac{1}{n}\sum_{t=1}^n Y_t\log(1+\frac{\lambda_t-\tilde{\lambda}_t}{\tilde{\lambda}_t})-(\lambda_t-\tilde{\lambda}_t)|\\
    \leq&\sup_{\theta\in\mathcal{D}}\frac{1}{n}\sum_{t=1}^n Y_t|\frac{\lambda_t-\tilde{\lambda}_t}{\tilde{\lambda}_t}|+|\lambda_t-\tilde{\lambda}_t|\\
    \leq&\sup_{\theta\in\mathcal{D}}\frac{1}{n}\sum_{t=1}^n|\lambda_t-\tilde{\lambda}_t|(\frac{Y_t}{d_0}+1)\\
    \leq&\frac{1}{n}\sum_{t=1}^nK\rho^t(\frac{Y_t}{d_0}+1)\\
    \to& 0, \; a.s.
  \end{align*}
where $d_0=\inf_{\theta\in\mathcal{D}}\min\{d_1,d_2\}>0$.

The $a.s.$ limit holds because of the Ces\`aro lemma and the observation that $\rho^tY_t\to 0, a.s.$ (see also \citet{FrancqZakoian2004}).

Secondly, we prove that $\E [\ell_t(\theta)]$ is continuous in $\theta$. Since $r$ is discrete, we need only to prove the following property. For any $\theta\in\mathcal{D}$, let $V_\eta(\theta)=B(\theta,\eta)$ be an open ball centered at $\theta$ with radius $\eta$, then
\begin{align}
	\E\left( \sup_{\tilde{\theta}\in V_\eta(\theta)}|\ell_t(\tilde{\theta})-\ell_t(\theta)| \right)\to0, \textrm{ as }\eta\to0.
	\label{continuity_theta}
\end{align}

 To see this, observe that 
 $$|\ell_t(\tilde\theta)-\ell_t(\theta)|\leq(\frac{Y_t}{\lambda_t(\tilde{\theta})}+1)|\lambda_t(\tilde{\theta})-\lambda_t(\theta)|,$$
 and
\begin{align*}
	|\lambda_t(\theta)-\lambda_t(\tilde{\theta})|
	=&|\sum_{k} \prod_{j=1}^{k-1}a_{t-j} c_{t-k}-\prod_{j=1}^{k-1}\tilde{a}_{t-j} \tilde{c}_{t-k}|\\
	=&|\sum_{k} (\prod_{j=1}^{k-1}a_{t-j}-\prod_{j=1}^{k-1}\tilde{a}_{t-j}) c_{t-k}+\prod_{j=1}^{k-1}\tilde{a}_{t-j}(c_{t-k}- \tilde{c}_{t-k})|\\
	\leq & C \eta \sum_{k} \rho^{k}(1+Y_{t-k}).
\end{align*}

Then 
\begin{align*}
\E\left( \sup_{\tilde{\theta}\in V_\eta(\theta)}|\ell_t(\tilde{\theta})-\ell_t(\theta)| \right)
\leq & \|\frac{Y_t}{d_0}+1\|_2 \|\lambda_t-\tilde{\lambda}_t\|_2\\
\leq & C \eta \|\frac{Y_t}{d_0}+1\|_2 \sum_k \rho^k\|Y_t\|_2\\
\to & 0, \textrm{ as }\eta \to 0.
\end{align*}

Next, we check the model identifiability. By Jensen inequality, we have
\begin{align*}
  \E\left[\ell_t(\theta)-\ell_t(\theta_0)\right]
  =&\E\left[\E\left(\log\frac{\phi(Y_t \mid \lambda_t(\theta))}{\phi(Y_t \mid \lambda_t(\theta_0))} \mid \mathcal{F}_{t-1}\right)\right]\\
  \leq &\E\left[\log \E\left(\frac{\phi(Y_t \mid \lambda_t(\theta))}{\phi(Y_t \mid \lambda_t(\theta_0))} \mid \mathcal{F}_{t-1}\right)\right]\\
  = &\E(\log(1))=0,
\end{align*}
where $\phi(\cdot \mid y) $ denotes the Poisson distribution function with mean $y$, and the equality holds iff $\lambda_t(\theta)=\lambda_t(\theta_0)\; a.s.~\mathcal{F}_{t-1}$.

Suppose that $\tilde{\theta}$ satisfies $\tilde{\lambda}_t=\lambda_t(\tilde{\theta})=\lambda_t(\theta_0)\; a.s.~\mathcal{F}_{t-1}$. Without loss of generality, assume $\tilde{r}\geq r$. For ease of notation, let $\lambda_t=\lambda_t(\theta_0)$ temporarily, then conditional on $\mathcal{F}_{t-2}$, we have $\tilde{\lambda}_{t-1}=\lambda_{t-1}\; a.s.$, and almost surely
\begin{align}
  \tilde{\lambda}_t-\lambda_t=
  &(\tilde{d}_{t-1}+\tilde{b}_{t-1}Y_{t-1}+\tilde{a}_{t-1}\tilde{\lambda}_{t-1})-(d_{t-1}+b_{t-1}Y_{t-1}+a_{t-1}\lambda_{t-1})\nonumber\\
  =&[(\tilde{d}_1-d_1)+(\tilde{b}_{1}-b_{1})Y_{t-1}+(\tilde{a}_1-a_1)\lambda_{t-1}]1\left\{ Y_{t-1}\leq r \right\}\nonumber\\
  &+[(\tilde{d}_1-d_2)+(\tilde{b}_{1}-b_{2})Y_{t-1}+(\tilde{a}_{1}-a_2)\lambda_{t-1}]1\left\{ r< Y_{t-1}\leq \tilde{r} \right\}\nonumber\\
  &+[(\tilde{d}_{2}-d_2)+(\tilde{b}_{2}-b_{2})Y_{t-1}+(\tilde{a}_{2}-a_2)\lambda_{t-1}]1\left\{ \tilde{r}< Y_{t-1}\right\}.\label{diff_lambda}
\end{align}

Note that $\mathcal{F}_{t-1}=\sigma\left\{ Y_{t-1},\mathcal{F}_{t-2} \right\}$, $Y_{t} \mid \lambda_t \sim \textrm{Poisson}(\lambda_t)$,  it can be seen from Eq~\eqref{diff_lambda} that if $\tilde{\lambda}_t-\lambda_t=0\; a.s.\; \mathcal{F}_{t-1}$, we must have $\tilde{\theta}=\theta_0$.

Now we are ready to prove the consistency. Consider an arbitrary (small) open neighbourhood of $\theta_0$, say $V$,  then for any $\vartheta\in V^c\cap \mathcal{D}$, we have $\E [\ell_t(\vartheta)]< \E [\ell_t(\theta_0)]$, since $V^{c}\cap \mathcal{D}$ is compact and $\E [\ell_t(\theta)]$ is continuous in $\theta$, we have $\kappa=\E [\ell_t(\theta_0)]-\sup_{\theta\in V^{c}\cap \mathcal{D}}\E [\ell_t(\theta)] >0 $. And for any $\theta\in V^{c}\cap \mathcal{D}$, there exists $\eta_{\theta}>0$ such that $\E[\sup_{\vartheta\in V_{\eta_\theta}(\theta)}\ell_t(\theta)] < \E [\ell_t(\theta)]+\frac{1}{6}\kappa$. Also by the compactness of $V^{c}\cap \mathcal{D}$, there exists a finite open cover of $V^c\cap \mathcal{D}$, say, $\{V_{\eta_{\theta_j}}(\theta_j),j=1,\dots,m\}$. For any $\theta\in \mathcal{D}$ and $k\gg 0$,
\begin{align*}
	&\varlimsup_{n\to\infty} \sup_{\theta^*\in V_{1/k}(\theta)\cap \mathbf{\Theta}} \frac{1}{n}\tilde{\ell}(\theta^*)\\
	\leq &\varlimsup_{n\to\infty} \sup_{\theta^*\in V_{1/k}(\theta)\cap \mathbf{\Theta}} \frac{1}{n}\ell(\theta^*)+\varlimsup_{n\to\infty} \sup_{\theta^*\in V_{1/k}(\theta)\cap \mathbf{\Theta}} \frac{1}{n}|\ell(\theta^*)-\tilde{\ell}(\theta^*)|\\
	\leq &\varlimsup_{n\to\infty} \frac{1}{n} \sum_{t=1}^n\sup_{\theta^*\in V_{1/k}(\theta)\cap \mathbf{\Theta}} \ell_t(\theta^*).
\end{align*}

By Corollary~\ref{prop:stable_y} and as in \citet{FrancqZakoian2004}, we have almost surely for $n\gg 0$ and $j=1,\dots,m$,
\begin{align*}
	\sup_{\theta\in V\eta_{\theta_j}(\theta_j)} \frac{1}{n} \sum_{t=1}^{n} \tilde{\ell}_t(\theta)
	&\leq \sup_{\theta\in V\eta_{\theta_j}(\theta_j)} \frac{1}{n} \sum_{t=1}^{n} \ell_t(\theta)+\frac{1}{6}\kappa\\ 
	&\leq  \frac{1}{n} \sum_{t=1}^{n} \sup_{\theta\in V\eta_{\theta_j}(\theta_j)}\ell_t(\theta)+\frac{1}{6}\kappa \\
&\leq \E \left(\sup_{\theta\in V\eta_{\theta_j}(\theta_j)}\ell_t(\theta)\right)+\frac{1}{3}\kappa \\
&\leq \E [l_{t}(\theta_0)]-\frac{2}{3}\kappa.
\end{align*}
And
\begin{align*}
	\sup_{\theta\in V} \frac{1}{n} \sum_{t=1}^{n} \tilde{\ell}_t(\theta)\geq \frac{1}{n} \sum_{t=1}^{n} \tilde{\ell}_t(\theta_0)\geq \frac{1}{n} \sum_{t=1}^{n} \ell_t(\theta_0) -\frac{1}{6}\kappa \geq \E [\ell_t(\theta_0)]-\frac{1}{3}\kappa.
\end{align*}
Therefore, for any (small) neighbourhood of $\theta_0$, $V$, for $n\gg 0$, we have almost surely
\begin{align*}
	\sup_{\theta\in V\eta_{\theta_j}(\theta_j)} \frac{1}{n} \sum_{t=1}^{n} \tilde{\ell}_t(\theta)\leq \sup_{\theta\in V} \frac{1}{n} \sum_{t=1}^{n} \tilde{\ell}_t(\theta),
\end{align*}
which implies $\hat{\theta}\in V$.

%
%
\end{proof}

\subsection{Proof of Theorem~\ref{asymptotics_theta}}
We here only give an outline of the proof, a detailed proof can be found in the supplementary material.
\begin{proof}
  By Taylor's expansion, for $j=1,\dots,6$, there exists some $\theta_{ (j) }$  between $\theta_0$ and $\hat{\theta}$ such that
  \begin{align*}
    0=\frac{1}{\sqrt{n}}\sum_{t=1}^n \frac{\partial{\tilde{\ell}_t(\hat{\theta})}}{\partial \theta_j}
    =\frac{1}{\sqrt{n}} \sum_{t=1}^n \frac{\partial \tilde{\ell}_t(\theta_0)}{\partial\theta_j}
    +\left(\frac{1}{n} \sum_{t=1}^n \frac{\partial^2 \tilde{\ell}_t(\theta_{(j)})}{\partial \theta_j \partial\theta^\intercal}\right) \sqrt{n}(\hat{\theta}-\theta_0).
  \end{align*}
The theorem follows if it can be proved that
  \begin{align*}
    \frac{1}{\sqrt{n}}\sum_{t=1}^n\frac{\partial\tilde{\ell}_t(\theta_0)}{\partial\theta}\xrightarrow{d} N(0,G),
  \end{align*}
  and
  \begin{align*}
    \frac{1}{n} \sum_{t=1}^n \frac{\partial^2 \tilde{\ell}_t(\theta^*)}{\partial \theta \partial\theta^\intercal}\xrightarrow{p} -G,
  \end{align*}
for all $\theta^* $ between $\theta_0$ and $\hat{\theta}$.

To show these, we prove the following statements,
  \begin{enumerate}
\renewcommand{\theenumi}{(S\arabic{enumi})}
    \item \label{s1} $\frac{1}{\sqrt{n}}\sum_{t=1}^n\frac{\partial \ell_t(\theta_0)}{\partial \theta}\xrightarrow{d} N(0,G)$.
    \item $\|\frac{1}{\sqrt{n}} \sum_{t=1}^n (\frac{\partial \ell_t(\theta_0)}{\partial \theta}-\frac{\partial \tilde{\ell}_t(\theta_0)}{\partial \theta})\|\xrightarrow{p} 0$.\label{s2}
    \item There exists a neighbourhood of $\theta_0$, $V(\theta_0)$, such that for all $i,j,k\in\left\{ 1,\dots,6 \right\}$,
   \begin{align*}
     \E \left( \sup_{\theta\in V(\theta_0)} |\frac{\partial^3 \ell_t(\theta)}{\partial \theta_i \partial \theta_j \partial \theta_k}| \right)<\infty.
   \end{align*}
\label{s3}
    \item For the neighbourhood  $V(\theta_0)$ specified above,
    \begin{align*}
      \sup_{\theta\in V(\theta_0)} \|\frac{1}{n}\sum_{t=1}^n \left( \frac{\partial^2 \ell_t(\theta)}{\partial \theta \partial \theta^\intercal }-\frac{\partial^2 \tilde{\ell}_t(\theta)}{\partial \theta \partial \theta^\intercal }\right) \|\xrightarrow{p} 0.
    \end{align*}
\label{s4}
    \item  $\frac{1}{n}\sum_{t=1}^n\frac{\partial^2 \ell_t(\theta^*)}{\partial \theta \partial \theta^\intercal}\xrightarrow{a.s.} -G$, uniformly for all $\theta^* $ between $\theta_0$ and $\hat{\theta}$.\label{s5}
  \end{enumerate}

\end{proof}

\bibliographystyle{jtsa1}
\bibliography{ref}

\begin{thebibliography}{22}
\providecommand{\natexlab}[1]{#1}
\expandafter\ifx\csname urlstyle\endcsname\relax
  \providecommand{\doi}[1]{doi:\discretionary{}{}{}#1}\else
  \providecommand{\doi}{doi:\discretionary{}{}{}\begingroup
  \urlstyle{rm}\Url}\fi

\bibitem[{Billingsley(1999)}]{Billingsley2011}
Billingsley, P. (1999) \emph{Convergence of Probability Measures}.
\newblock Wiley-Interscience publication.

\bibitem[{Blasques et~al.(2012)Blasques, Koopman and
  Lucas}]{BlasquesKoopmanLucas2012}
Blasques, F., Koopman, S. and Lucas, A. (2012) Stationarity and ergodicty of
  univariate generalized autoregressive score processes.
\newblock \emph{Tinbergen Institute discussion paper} .

\bibitem[{Bollerslev(1986)}]{Bollerslev1986}
Bollerslev, T. (1986) Generalized autoregressive conditional
  heteroskedasticity.
\newblock \emph{Journal of Econometrics} \textbf{31}, 307--327.

\bibitem[{Cheng et~al.(2011)Cheng, Li, Yu, Zhou, Wang and
  Lo}]{ChengLiYuZhouWangLo2011}
Cheng, X., Li, W.~K., Yu, P. L.~H., Zhou, X., Wang, C. and Lo, P.~H. (2011)
  Modeling threshold conditional heteroscedasticity with regime-dependent
  skewness and kurtosis.
\newblock \emph{Computional Statistics and Data Analysis} \textbf{55}(9),
  2590--2604.

\bibitem[{Cox(1981)}]{Cox1981}
Cox, D.~R. (1981) Statistical analysis of time series: {s}ome recent
  developments.
\newblock \emph{Scandinavian Journal of Statistics} \textbf{8}(2), 93--115.

\bibitem[{Davis and Liu(2012)}]{DavisLiu2012}
Davis, R. and Liu, H. (2012) Theory and inference for a class of nonlinear
  models with application to time series of counts.
\newblock \emph{arXiv:1204.3915v1} .

\bibitem[{Davis et~al.(2003)Davis, Dunsmuir and
  Streett}]{DavisDunsmuirStreett2003}
Davis, R.~A., Dunsmuir, W. T.~M. and Streett, S.~B. (2003)
  Observation\nobreakhyphen driven models for {P}oisson counts.
\newblock \emph{Biometrika} \textbf{90}(4), 777--790.

\bibitem[{Diaconis and Freedman(1999)}]{DiaconisFreedman1999}
Diaconis, P. and Freedman, D. (1999) Iterated random functions.
\newblock \emph{SIAM Review} \textbf{41}(1), 45--76.

\bibitem[{Douc et~al.(2013)Douc, Doukhan and
  Moulines}]{DoucDoukhanMoulines2013}
Douc, R., Doukhan, P. and Moulines, E. (2013) Ergodicity of observation-driven
  time series models and consistency of the maximum likelihood estimator.
\newblock \emph{Stochastic Processes and their Applications} \textbf{123}(7),
  2620--2647.

\bibitem[{Doukhan et~al.(2012)Doukhan, Fokianos and
  Tj{\o}stheim}]{DoukhanFokianosTjostheim2012}
Doukhan, P., Fokianos, K. and Tj{\o}stheim, D. (2012) On weak dependence
  conditions for poisson autoregressions.
\newblock \emph{Statistics \& Probability Letters} \textbf{82}(5), 942--948.

\bibitem[{Duflo(1997)}]{Duflo1997}
Duflo, M. (1997) \emph{Random Iterative Models}.
\newblock Springer.

\bibitem[{Ferland et~al.(2006)Ferland, Latour and
  Oraichi}]{FerlandLatourOraichi2006}
Ferland, R., Latour, A. and Oraichi, D. (2006) Integer-valued {GARCH}
  processes.
\newblock \emph{Journal of Time Series Analysis} \textbf{27}, 923--942.

\bibitem[{Fokianos et~al.(2009)Fokianos, Rahbek and
  Tj{\o}stheim}]{FokianosRahbekTjostheim2009}
Fokianos, K., Rahbek, A. and Tj{\o}stheim, D. (2009) Poisson autoregression.
\newblock \emph{Journal of the American Statistical Association}
  \textbf{104}(488), 1430--1439.

\bibitem[{Fokianos and Tj{\o}stheim(2011)}]{FokianosTjostheim2011}
Fokianos, K. and Tj{\o}stheim, D. (2011) Log-linear {P}oisson autoregression.
\newblock \emph{Journal of Multivariate Analysis} \textbf{102}, 563--578.

\bibitem[{Fokianos and Tj{\o}stheim(2012)}]{FokianosTjostheim2012}
Fokianos, K. and Tj{\o}stheim, D. (2012) Nonlinear poisson autoregression.
\newblock \emph{Annals of the Institute of Statistical Mathematics}
  \textbf{64}, 1205--1225.

\bibitem[{Francq and Zako\"ian(2004)}]{FrancqZakoian2004}
Francq, C. and Zako\"ian, J.-M. (2004) Maximum likelihood estimation of pure
  {GARCH} and {ARMA-GARCH} processes.
\newblock \emph{Bernoulli} \textbf{10}(4), 605--637.

\bibitem[{Meyn and Tweedie(1993)}]{MeynTweedie1993}
Meyn, S.~P. and Tweedie, R.~L. (1993) \emph{Markov Chains and Stochastic
  Stability}.
\newblock Springer-Verlag, New York.

\bibitem[{Neumann(2011)}]{Neumann2011}
Neumann, M. (2011) Absolute regularity and ergodicity of {Poisson} count
  processes.
\newblock \emph{Bernoulli} \textbf{17}(4), 1268--1284.

\bibitem[{Tong(1990)}]{Tong1990}
Tong, H. (1990) \emph{Non-Linear Time Series: A Dynamical System Approach}.
\newblock Oxford University Press.

\bibitem[{Woodard et~al.(2011)Woodard, Matteson and
  Henderson}]{WoodardMattesonHenderson2011}
Woodard, D.~B., Matteson, D.~S. and Henderson, S.~G. (2011) Stationarity of
  generalized autoregressive moving average models.
\newblock \emph{Electronic Journal of Statistics} \textbf{5}, 800--828.

\bibitem[{Wu and Shao(2004)}]{WuShao2004}
Wu, W. and Shao, X. (2004) Limit theorems for iterated random functions.
\newblock \emph{Journal of Applied Probability} \textbf{41}(2), 425--436.

\bibitem[{Zucchini and MacDonald(2009)}]{ZucchiniMacDonald2009}
Zucchini, W. and MacDonald, I.~L. (2009) \emph{Hidden {M}arkov models for Time
  Series: an Introduction Using {R}}.
\newblock Monographs on statistics and applied probability;110. {CRC} Press.

\end{thebibliography}

\newpage
\section*{Supplementary material}
\subsection*{ Complementary for establishing the statements in the proof of Theorem~\ref{asymptotics_theta} }
 We write $\lambda_t$ as in Eq~\eqref{lambda_vc}, then  
\begin{align*}
    \frac{\partial \ell_t}{\partial \theta}=(\frac{Y_t}{\lambda_t}-1)\frac{\partial \lambda_t}{\partial \theta},
  \end{align*}
  and
  \begin{align}
    \frac{\partial \lambda_t}{\partial \theta}=&\left(
    \begin{array}{c}
    \frac{\partial \lambda_t}{\partial \theta^{(1)}}\\
    \frac{\partial \lambda_t}{\partial \theta^{(2)}}\\
  \end{array}
  \right),\label{lambda_deri}
 \end{align}
  with
  \begin{align*}
    \frac{\partial \lambda_t}{\partial \theta^{(i)}}=&
    \left(\begin{array}{l}1\\ \lambda_{t-1}\\ Y_{t-1}\end{array}\right)1\left\{ Y_{t-1}\in R_i \right\}
      +a_{t-1} \frac{\partial \lambda_{t-1}}{\partial \theta^{(i)}}
      \quad( i=1,2).
  \end{align*}


The derivative in Eq~\eqref{lambda_deri} can be written in a compact form as
  \begin{align*}
    \frac{\partial \lambda_t}{\partial \theta}
    :=&\nu_{t-1}+a_{t-1}\frac{\partial \lambda_{t-1}}{\partial \theta}
    =\sum_{k\geq 1}(\prod_{j=1}^{k-1} a_{t-j}) \nu_{t-k}.
  \end{align*}
By assumption $a_t\leq\max\left\{ a_{1},a_2 \right\}=a_M<1$, then
  \begin{align*}
    \frac{\partial \lambda_t}{\partial \theta}\leq \sum_k a_M^{k-1} \nu_{t-k}.
  \end{align*}
  In particular, we have
  \begin{align}
    \frac{\partial \lambda_t}{\partial d_{i}}
    =\sum_{k\geq 1}(\prod_{j=1}^{k-1} a_{t-j})1\left\{ Y_{t-1}\in R_{i} \right\} \leq \sum_{k\geq 1} a_M^{k-1}
    \leq \frac{1}{1-a_M}.
    \label{alpha0bound}
  \end{align}
  Writing $\lambda_t=\sum_{k\geq 1} (\prod_{j=1}^{k-1}a_{t-j}) c_{t-k}$ with $c_{t}=d_t+b_tY_t$, we have
\begin{align*}
    \frac{\partial \lambda_t}{\partial b_{i}}
    =\sum_{k\geq 1}(\prod_{j=1}^{k-1} a_{t-j})\frac{\partial b_{t-k}}{\partial b_{i}}Y_{t-k}
    =\sum_{k\geq 1}(\prod_{j=1}^{k-1} a_{t-j})1\left\{ Y_{t-k}\in R_{i} \right\}Y_{t-k},
\end{align*}
which implies
\begin{align}
    \|\frac{\partial \lambda_t}{\partial b_{i}}\|_2
    \leq& \|Y_{t}\|_2\sum_{k\geq 1} a_M^k.
    \label{betabound}
  \end{align}
Also,
\begin{align}
    \frac{\partial \lambda_t}{\partial a_{i}}
    =\sum_{k\geq 1}\frac{\partial (\prod_{j=1}^{k-1} a_{t-j})}{\partial a_{i}}c_{t-k}    \leq \sum_{k\geq 1}\frac{k-1}{a_{i}} (\prod_{j=1}^{k-1} a_{t-j})c_{t-k}\nonumber,
\end{align}
implies
\begin{align}
    \E\left( \frac{\partial \lambda_t}{\partial a_{i}} \right)
    \leq &\sum_{k\geq 1}\frac{k-1}{a_{i}}a_M^{k-1} (d_M+b_M \E (Y_t))< \infty\label{alpha1bound1},
  \end{align}
    where $ d_M=\max\left\{ d_1,d_2 \right\}, b_M=\max\left\{ b_{1},b_{2}\right\}$, and
  \begin{align}
    \|\frac{\partial \lambda_t}{\partial a_{i}}\|_2
    \leq &\sum_{k\geq 1}\frac{k-1}{a_{i}}a_M^{k-1} (d_M+b_M \|Y_t\|_2)< \infty.
\label{alpha1bound2}
  \end{align}

  Note that
  \begin{align*}
\E\left[\frac{\partial \ell_t(\theta_0)}{\partial \theta}\right]
=\E\left[ \left(\frac{Y_t}{\lambda_t}-1\right)\frac{\partial \lambda_t}{\partial \theta}\right]
=\E\left[\E\left(\frac{Y_t}{\lambda_t}-1\right)\frac{\partial \lambda_t}{\partial \theta}|\mathcal{F}_{t-1}\right]
=0.
  \end{align*}
  Since $\lambda_t$ is bounded from zero, $\lambda_t\geq d_0=\min\left\{ d_1,d_2 \right\}$, with the results in Eq~\eqref{alpha0bound}, Eq~\eqref{betabound}, Eq~\eqref{alpha1bound1}, and Eq~\eqref{alpha1bound2} we have
\begin{align*}
  \textrm{var}\left[\frac{\partial \ell_t(\theta_0)}{\partial \theta}\right]
  =&\E\left[ \left(\frac{Y_t}{\lambda_t}-1\right)^2\left(\frac{\partial \lambda_t}{\partial \theta}\right)\left(\frac{\partial \lambda_t}{\partial \theta}\right)^\intercal\right]\\
  =&\E \left[ \E\left\{ \left(\frac{Y_t}{\lambda_t}-1\right)^2\left(\frac{\partial \lambda_t}{\partial \theta}\right)\left(\frac{\partial \lambda_t}{\partial \theta}\right)^\intercal \mid \mathcal{F}_{t-1}\right\}\right]\\
  =&\E\left[ \frac{1}{\lambda_t}\left(\frac{\partial \lambda_t}{\partial \theta}\right)\left(\frac{\partial \lambda_t}{\partial \theta}\right)^\intercal\right]\\
  =&G<\infty.
\end{align*}

It can be seen that $G$ is non-degenerate (cf. \citet{FrancqZakoian2004}).

Since $\left\{ \partial \ell_t(\theta_0)/\partial \theta \right\} $ is a $L_4$ martingale difference, by the Cram\'er-Wold device and the central limit theorem in Theorem 18.1 of \citet{Billingsley2011} we have the weak convergence,
\begin{align*}
  \frac{1}{\sqrt{n}}\sum_{t=1}^n\frac{\partial \ell_t(\theta_0)}{\partial\theta}\xrightarrow{d} N(0,G).
  \end{align*}

  Then we shall prove Statement \ref{s2}.
To show this, note that for $i=1,2$,
  \begin{align}
    \frac{\partial \tilde{\lambda}_t}{\partial d_{i}}
    =&\sum_{k\geq 1}^{t-2}(\prod_{j=1}^{k-1} a_{t-j})1\left\{ Y_{t-k}\in R_{i} \right\}+\prod_{j=1}^{k-1}a_{t-j} \frac{\partial \tilde{\lambda}_{1}}{\partial d_{i}},\\
    \frac{\partial \tilde{\lambda}_t}{\partial a_{i}}
    =&\sum_{k=1}^{t-2}\frac{\partial (\prod_{j=1}^{k-1} a_{t-j})}{\partial a_{i}}c_{t-k}+\prod_{j=1}^{t-1}a_{t-j} \frac{\partial \tilde{\lambda}_1}{\partial a_{i}},\\
    \frac{\partial \tilde{\lambda}_t}{\partial b_{i}}
    =&\sum_{k=1}^{t-2} (\prod_{j=1}^{k-1} a_{t-j})Y_{t-k}1\left\{ Y_{t-k}\in R_{i} \right\} +\prod_{j=1}^{t-1}a_{t-j} \frac{\partial \tilde{\lambda}_1}{\partial b_{i}}.
  \end{align}
Since $\partial \tilde{\lambda}_1/\partial \theta$ can be regarded as a fixed value, we have
\begin{align*}
  \sup_{\theta\in \mathcal{D}}\|\frac{\partial \tilde{\lambda}_t}{\partial \theta}
  -\frac{\partial \lambda_t}{\partial \theta}\|
  \leq C \rho^t, a.s.
\end{align*}
Note that we also have $|\lambda_t-\tilde{\lambda}_t|\leq C\rho^t$, which implies $|\frac{1}{\lambda_t}-\frac{1}{\tilde{\lambda}_t}|\leq C\rho^t$, for $\lambda_t $ and $ \tilde{\lambda}_t$ are bounded from 0. Note that
\begin{align*}
  \frac{\partial \ell_t(\theta_0)}{\partial \theta}-\frac{\partial \tilde{\ell}_t(\theta_0)}{\partial \theta}
  = & \left(\frac{Y_t}{\lambda_t(\theta_0)}-1\right)\frac{\partial \lambda_t(\theta_0)}{\partial \theta}- \left(\frac{Y_t}{\tilde{\lambda}_t(\theta_0)}-1\right)\frac{\partial \tilde{\lambda}_t(\theta_0)}{\partial \theta}\\
  =& Y_t\left[ \left(\frac{1}{\lambda_t}-\frac{1}{\tilde{\lambda}_t}\right)\frac{\partial \lambda_t}{\partial \theta}+\frac{1}{\tilde{\lambda}_t}\left(\frac{\partial \lambda_t}{\partial \theta}-\frac{\partial \tilde{\lambda}_t}{\partial \theta}\right)\right]-\left(\frac{\partial \lambda_t}{\partial \theta}-\frac{\partial \tilde{\lambda}_t}{\partial \theta}\right).
\end{align*}
Then it is readily seen that
\begin{align*}
  \|\frac{\partial \ell_t(\theta_0)}{\partial \theta}-\frac{\partial \tilde{\ell}_t(\theta_0)}{\partial \theta}\|
  \le & C\rho^t \left[1+Y_t \left(1+ \|\frac{\partial \lambda_t}{\partial \theta}\|\right)\right].
\end{align*}
Note that $\E (Y_t\|\partial \lambda_t(\theta_0)/\partial \theta\|)<\infty$, then for any $\varepsilon>0$,
\begin{align*}
  \pr \left(\|\frac{1}{\sqrt{n}}\sum_{t=1}^n\left(\frac{\partial \ell_t(\theta_0)}{\partial \theta}-\frac{\partial \tilde{\ell}_t(\theta_0)}{\partial \theta}\right)\|>\varepsilon\right)
  \leq& \frac{1}{\sqrt{n} \varepsilon}\sum_{t=1}^n C\rho^t\left[ 1+\E (Y_t)+ \E \left(\|Y_t \frac{\partial \lambda_t}{\partial \theta}\|\right)\right]\\
 \to & 0, \textrm{ as } n\to\infty.
\end{align*}

Next we will prove Statement \ref{s3}. Through direct calculation, we obtain
\begin{align}
  \frac{\partial^3 \ell_t(\theta)}{\partial \theta_i \partial \theta_j \partial \theta_k}
  &=\left(-\frac{Y_t}{\lambda_t^2}\right)\left(
  \frac{\partial^2\lambda_t}{\partial\theta_i\partial\theta_j}\frac{\partial\lambda_t}{\partial\theta_k}+
  \frac{\partial^2\lambda_t}{\partial\theta_i\partial\theta_k}\frac{\partial\lambda_t}{\partial\theta_j}+
  \frac{\partial^2\lambda_t}{\partial\theta_j\partial\theta_k}\frac{\partial\lambda_t}{\partial\theta_i}\right)\nonumber\\
  &+2\frac{Y_t}{\lambda_t^3}\frac{\partial\lambda_t}{\partial\theta_i}\frac{\partial\lambda_t}{\partial\theta_j}\frac{\partial\lambda_t}{\partial\theta_k}
  +\left(\frac{Y_t}{\lambda_t}-1\right)\frac{\partial^3\lambda_t}{\partial\theta_i\partial\theta_j\partial\theta_k}.
\label{3rd_derivative}
\end{align}

Consider, for example, $\partial^3 \ell_t(\theta)/\partial a_{1}^3$. Write $\lambda_t=\sum_k \prod_{j=1}^{k-1}a_{t-j} c_{t-k}$, then for $i=1,2,3$,
\begin{align*}
\frac{\partial^i \lambda_t(\theta)}{\partial a_{1}^i}
=\sum_{k\geq 1} \frac{\partial^i (\prod_{j=1}^{k-1}a_{t-j})}{\partial a_{1}^i} c_{t-k}
\leq \sum_{k\geq 1} \frac{(k-1)\cdots(k-i)}{a_{1}^i}(\prod_{j=1}^{k-1}a_{t-j})c_{t-k}.
\end{align*}
We may select $ V(\theta_0)$ small enough such that $a_M=\sup_{\theta\in V(\theta_0)}\max\{a_{1},a_2\}<1$, and $a_m=\inf_{\theta\in V(\theta_0)}\min\{a_{1},a_2\}>0$, then 
\begin{align*}
\frac{\partial^i \lambda_t(\theta)}{\partial a_{1}^i}
\leq &\sum_{k\geq 1} \frac{(k-1)\cdots(k-i)}{a_m^i}a_M^{k-1}c_{t-k}\quad (i=1,2,3).
\end{align*}

Recall that $c_t=d_t+a_tY_t$, then it is easily seen that there exist constants $ \zeta_{t,i}>0 $, such that $\sum_{t}\zeta_{t,i}<\infty$, and
\begin{align*}
  \sup_{\theta\in V(\theta_0)} \frac{\partial^i \lambda_t(\theta)}{\partial a_{1}^i}
  \leq \zeta_{0,i}+\sum_{k\geq 1} \zeta_{k,i} Y_{t-k}
  :=\mu_{t,i}.
\end{align*}


From Eq~\eqref{3rd_derivative}, we have
\begin{align*}
  \E\left( \sup_{\theta\in V(\theta_0)}|\frac{\partial^3 \ell_t(\tilde{\theta})}{\partial \theta_i \partial \theta_j \partial \theta_k}| \right)\leq \E\left[3\frac{Y_t}{d_m^2}\mu_{t,2}\mu_{t,1}+2\frac{Y_t}{d_m^3}\mu_{t,1}^3+\left(\frac{Y_t}{d_m}+1\right)\mu_{t,3}\right].
\end{align*}

The expression on the right-hand-side of the inequality can be proved to be finite, if $\mu_{t,3}\in L_2$, $\mu_{t,1}\in L_6$, $\mu_{t,2}\in L_4$, which can be verified by Minkowski inequality and the fact that $Y_t\in L_p$, for all $p> 0$.

As for the second order derivative in Statement \ref{s4}, note that
similar to the case for the first order derivative, we can prove
\begin{align}
  \sup_{\theta\in\mathbf{\Theta}}\|  \frac{\partial^2 \lambda_t}{\partial\theta \partial\theta^\intercal}-\frac{\partial^2 \tilde{\lambda}_t}{\partial\theta \partial\theta^\intercal}\|\leq C\rho^t.
  \label{2d}
\end{align}

It is easily seen that
\begin{align*}
  \frac{\partial^2 \ell_t}{\partial\theta \partial\theta^\intercal}=&\left(\frac{Y_t}{\lambda_t}-1\right)\frac{\partial^2 \lambda_t}{\partial\theta \partial\theta^\intercal}-\frac{Y_t}{\lambda_t^2}\left(\frac{\partial \lambda_t}{\partial \theta}\right)\left(\frac{\partial \lambda_t}{\partial \theta}\right)^\intercal,
\end{align*}
and
\begin{align*}
  \E\left(\frac{\partial^2 \ell_t(\theta_0)}{\partial\theta \partial\theta^\intercal}\right)=-G.
\end{align*}

Then
\begin{align*}
  &\frac{\partial^2 \ell_t}{\partial\theta_i \partial\theta_k}
  -\frac{\partial^2 \tilde{\ell}_t}{\partial\theta_i \partial\theta_k}\\
  =& Y_t\left[ \left(\frac{1}{\lambda_t}-\frac{1}{\tilde{\lambda}_t}\right)\frac{\partial^2\lambda_t}{\partial\theta_i\partial\theta_k}+\frac{1}{\tilde{\lambda}_t}\left(\frac{\partial^2 \lambda_t}{\partial\theta_i \partial\theta_k}
  -\frac{\partial^2 \tilde{\lambda}_t}{\partial\theta_i \partial\theta_k}\right)+\left(\frac{1}{\lambda_t^2}-\frac{1}{\tilde{\lambda}_t^2}\right)\frac{\partial\lambda_t}{\partial\theta_i}\frac{\partial\lambda_t}{\partial\theta_k}\right.\\
  &\left.+\frac{1}{\tilde{\lambda}_t^2}\left\{\frac{\partial \lambda_t}{\partial \theta_i}\left(\frac{\partial \lambda_t}{\partial \theta_j}-\frac{\partial \tilde{\lambda}_t}{\partial \theta_j}\right)+\frac{\partial \tilde{\lambda}_t}{\partial \theta_j}\left(\frac{\partial \lambda_t}{\partial \theta_i}-\frac{\partial \tilde{\lambda}_t}{\partial \theta_i}\right) \right\}\right]+\left(\frac{\partial^2 \lambda_t}{\partial\theta_i \partial\theta_k}
  -\frac{\partial^2 \tilde{\lambda}_t}{\partial\theta_i \partial\theta_k}\right).
\end{align*}
Thus, we have
  \begin{align*}
  |\frac{\partial^2 \ell_t}{\partial\theta_i \partial\theta_k}
  -\frac{\partial^2 \tilde{\ell}_t}{\partial\theta_i \partial\theta_k}|
  \leq  C\left[1+Y_t\left(\frac{\partial^2\lambda_t}{\partial\theta_i\partial\theta_k}+\frac{\partial\lambda_t}{\partial\theta_i}\frac{\partial\lambda_t}{\partial\theta_k}+\frac{\partial\lambda_t}{\partial\theta_i}+\frac{\partial\lambda_t}{\partial\theta_k}\right)\right]\rho^t.
\end{align*}
Let 
$$\Gamma_t=\frac{\partial^2\lambda_t}{\partial\theta_i\partial\theta_k}+\frac{\partial\lambda_t}{\partial\theta_i}\frac{\partial\lambda_t}{\partial\theta_k}+\frac{\partial\lambda_t}{\partial\theta_i}+\frac{\partial\lambda_t}{\partial\theta_k},$$
then it can be seen that around a neighbourhood of $\theta_0$, without loss of generality, assuming the same $V(\theta_0)$, we have
$  \sup_{\theta\in V(\theta_0)}\E\left(\Gamma_tY_t\right)<\infty.
$

Similar as in the argument for Statement~\ref{s3}, we can obtain the following by Markov inequality,
\begin{align}
  \sup_{\theta\in\mathbf{\Theta}}|\frac{1}{n}\sum_{t=1}^n \left(  \frac{\partial^2 \ell_t}{\partial\theta_i \partial\theta_j}-\frac{\partial^2 \tilde{\ell}_t}{\partial\theta_i \partial\theta_j}\right)|
  \xrightarrow{p}0.
  \label{}
\end{align}

Lastly, we prove Statement \ref{s5}. Recall that $\theta^* $ lies between $\theta_0$ and $\hat{\theta}$. Consider the Taylor expansion of the second-order derivatives of $\ell_t$ at $\theta_0$, we have
\begin{align*}
  \frac{1}{n}\sum_t \frac{\partial^2 \ell_t(\theta^*)}{\partial\theta_i \partial \theta_j}
  =\frac{1}{n}\sum_t \frac{\partial^2 \ell_t(\theta_0)}{\partial\theta_i \partial \theta_j}+\frac{1}{n}\sum_t\frac{\partial^3 \ell_t(\tilde{\theta})}{\partial\theta_i \partial \theta_j \partial \theta}(\theta^*-\theta_0),
\end{align*}
for some $\tilde{\theta}$ between $\theta_0$ and $\theta^*$. Then the almost sure convergence of $\tilde{\theta}$ to $\theta_0$, the ergodic theorem in Corollary~\ref{prop:stable_y}, and 
Statement~\ref{s3} imply that
\begin{align*}
  \varlimsup \sup_{\theta\in V(\theta_0)}\|\frac{1}{n} \sum_t \frac{1}{n} \frac{\partial^3 \ell_t(\theta)}{\partial\theta_i \partial \theta_j \partial \theta}\| < \infty, a.s.
\end{align*}
Then we have
\begin{align*}
  \lim_{n\to\infty}\frac{1}{n}\sum_t \frac{\partial^2 \ell_t(\theta^*)}{\partial\theta_i \partial \theta_j}
  =\lim_{n\to\infty}\frac{1}{n}\sum_t \frac{\partial^2 \ell_t(\theta_0)}{\partial\theta_i \partial \theta_j}= -G(i,j)\; a.s.
\end{align*}

The proof is complete.

\end{document}